\newtheorem{Theorem}{Theorem}
\newtheorem{Lemma}{Lemma}
\newtheorem{Definition}{Definition}
\title{How to select the largest $k$ elements from evolving data?\footnote{This work was partially supported by NSFC.}}
\author{Qin Huang}
\author{Xingwu Liu}
\author{Xiaoming Sun}
\author{Jialin Zhang}
\affil{Institute of Computing Technology, Chinese Academy of Sciences}
\date{}
\begin{document}

\maketitle

\vspace{-0.2cm}
\textbf{Abstract.}{
In this paper we investigate the top-$k$-selection problem, i.e. determine the largest, second largest, ..., and the $k$-th largest elements, in the dynamic data model. In this model the order of elements evolves dynamically over time. In each time step the algorithm can only probe the changes of data by comparing a pair of elements. Previously only two special cases were studied~\cite{Anagnostopoulos}: finding the largest element and the median; and sorting all elements. This paper systematically deals with $k\in [n]$ and solves the problem almost completely. Specifically, we identify a critical point $k^*$ such that the top-$k$-selection problem can be solved error-free with probability $1-o(1)$ if and only if $k=o(k^*)$. A lower bound of the error when $k=\Omega(k^*)$ is also determined, which actually is tight under some condition. On the other hand, it is shown that the top-$k$-set problem, which means finding the largest $k$ elements without sorting them, can be solved error-free for all $k\in [n]$. Additionally, we extend the dynamic data model and show that most of these results still hold.
}

\section{Introduction}
Algorithms, in the classical scenarios, assume the input is given at the beginning of the computation. In the era of big data, mass data force the algorithms to deal with changing data. Online algorithms, streaming algorithms are traditional ways to characterize such kind of data where these algorithms care more about the new coming data. In this paper, we consider another interesting paradigm which cares about the dynamically changing  relationship among the existing data. Such scenarios happen a lot in the real life: in the internet, besides new webpages are created everyday, the relative importance between two old webpages are also rapidly changing; in the market, the relative popularity between two similar goods evolves dynamically due to the changing price, advertisement strategy, etc; in the social network, the rank of hot topics is also changing due to the intrinsic shift in public opinion. More importantly, the order among objects (webpages, goods, topics) is more significant than the objects themselves. Many applications care about such order, especially the objects with higher order, like search engine, goods display arrangement, or guidance of public opinion. There two methodologies are widely used to learn the order: rating vs. ranking.  Researches may recruit volunteers or train models to either grade each objects or judge the comparison between two or more objects. The advantage and weakness between these two methods is a controversial question for a long time~\cite[Ch.7]{Thurstone}.

In this paper, we focused on a simplified model based on evolving relationships among data and learning methods based on comparison. The framework we use was first proposed by Anagnostopoulos et al.~\cite{Anagnostopoulos} to study many traditional problems including sorting and selection. In the model, a set $U$ of $n$ objects is equipped with a total order $\pi^t$ that evolves over discrete time $t\in \mathbb{N}$. At each time slot, $\alpha$ random consecutive pairs are swapped. In this paper, we further generalize their model to Gaussian-swapping model by relaxing the locality condition that we only swap consecutive pairs in the evolution. The evolution is stochastic, and the only way to probe $\pi^t$ is by querying: at time $t$, one can choose an arbitrary pair of objects in $U$ and query their order in $\pi^t$. The goal is to learn the information about the true order $\pi^t$ at every time $t$. In this paper, we focus on the top-$k$ largest objects in the order. See the formal definition of the problem in Section~\ref{sec:pre}.

Anagnostopoulos et al. \cite{Anagnostopoulos} proposed a randomized sorting algorithm for the whole set, which at every time $t$, with probability $1-o(1)$, guarantees that the {\em Kendall tau} distance between $\pi^t$ and the predicted order $\tilde{\pi}^t$ is at most $O(n\ln\ln{n})$. On the other hand, they showed that there is only a minor opportunity to improve the algorithm since the error is lower-bounded by $\Omega(n)$, for $t>\frac{n}{8}$. They also considered selecting the $k$-th largest elements (particularly the largest element and the median) problem and showed that with probability $1-o(1)$, their selection algorithm will output the right element, for any $k\in [n]$.
In the statistical data setting, when a selection algorithm output the $k$-th largest element, it automatically determined the set of largest $k$ elements (see for example Knuth's book~\cite{Knuth}). However, this is not apparent in this evolving data setting. A more challenge problem is that besides the set of the largest $k$ elements, the algorithm is expected to output the right order of these elements at the same time. We call the first problem {\em top-$k$-set problem} and the latter one {\em top-$k$-selection problem}.



\vspace{0.2cm}
\noindent {\bf Our contributions}

We firstly summarize our results for Anagnostopoulos's model.  In this paper, we provide an algorithm which for every time $t$, it can predict the top-$k$-set (i.e. the largest $k$ elements) with probability $1-o(1)$. For top-$k$-selection problem, we show almost tight results under the probability $1-o(1)$. In the work of Anagnostopoulos et al.~\cite{Anagnostopoulos}, they solved the case of $k=1$  and $k=n$ where the first case can be correctly output with probability $1-o(1)$, while the error lower bound is $\Omega(n)$ under the probability $1-o(1)$ for the latter case. In this paper we solve the rest cases. For constant $\alpha$, we propose an algorithm which aims to minimize the Kendall tau distance between the largest $k$ elements in $\pi^t$ and our predication $\tilde{\pi}^t$. When $k=o(\sqrt{n})$, our algorithm can guarantee that with probability $1-o(1)$ it will output the largest $k$ elements with the right order, i.e. the Kendall tau distance is 0. For $k=\Theta(\sqrt{n})$, the probability to output exactly right order is still a constant, while we also show that no algorithm can achieve success probability $1-o(1)$ in this case. For $k=\omega(\sqrt{n})$, our algorithm can guarantee that with probability $1-o(1)$ Kendall tau distance error is at most $O(\frac{k^2}{n})$. On the other hand, no algorithm with successful probability $1-o(1)$ can achieve better than $\Theta(\frac{k^2}{n})$. We then generalize our algorithm to general $\alpha$. All positive results and some of the negative results can be extended to any $\alpha=o(\frac{\sqrt{n}}{\ln n})$, see Section~\ref{sec:korderC} for details. We summarize our results in Table~\ref{tab:consecutive}.

\begin{table}[!h]
\centering
\caption{\emph{Consecutive-swapping model}}\label{tab:consecutive}
\begin{tabular}{ | p{1.5cm}<{\centering} | p{4.5cm}<{\centering} |}
  \hline
   $k$ & $X \triangleq \mathrm{KT}(\tilde{\pi}_k^t,\pi_k^t)$    \\
  \hline
   $o(\sqrt{\frac{n}{\alpha}})$ & $\Pr(X=0)=1-o(1)$*    \\
  \hline
   $\Theta(\sqrt{\frac{n}{\alpha}})$& $\Pr(X=0)=\Omega(1)$    \\
  \hline
  $\omega(\sqrt{\frac{n}{\alpha}})$ & $\Pr(X=O(\frac{k^2\alpha}{n}))=1-o(1)^{\dag}$ \\
  \hline
\end{tabular}

\noindent In $*$ case these bounds are tight among any algorithm with success probability $1-o(1)$; In $\dag$ case, the tight result holds for constant $\alpha$. See section~\ref{sec:korderC}
\end{table}


We then generalize our results to the Gaussian-swapping model. In the Gaussian-swapping model, at each time slot, the rank difference of any chosen swapped pairs follows a Gaussian distribution, while in the original model, the rank difference is always $1$ (i.e. consecutive pairs). We get similar results for this model and summarize them in Table~\ref{tab:Guassian}.

\vspace{-0.3cm}
\begin{table}[!h]
\centering
\caption{\emph{Gaussian-swapping model}}\label{tab:Guassian}
\begin{tabular}{ | p{1.5cm}<{\centering} | p{4.5cm}<{\centering}  |}
  \hline
   $k$ & $X\triangleq \mathrm{KT}(\tilde{\pi}_k^t,\pi_k^t)$   \\
  \hline
   $o(\frac{\sqrt{n}}{\ln^{0.25} n})$ & $\Pr(X=0)=1-o(1)$      \\
  \hline
   $\Theta(\frac{\sqrt{n}}{\ln^{0.25} n})$ &$\Pr(X=0)=\Omega(1)$    \\
  \hline
  $\omega(\frac{\sqrt{n}}{\ln^{0.25} n})$ & $\Pr(X=O(\frac{k^2\ln n}{n}))=1-o(1)$   \\
  \hline
\end{tabular}
\end{table}

\noindent {\bf Related work}

The investigation of top-$k$-set and top-$k$-selection problem can date back to 1960s~\cite{Kislitsyn,Knuth} and have been studied extensively~\cite{BFPRT,Yap,Zwick,Arnold}. Sorting-related problems on dynamic data were just initiated very recently. Anagnostopoulos et al.~\cite{Anagnostopoulos} focused on two extreme cases ($k=1$ and $k=n$) of the top-$k$-selection problem, 
which inspired us to explore the problem for general $k$. Additionally, the data evolving model in \cite{Anagnostopoulos} is restricted in the sense that at every time step $t$, only $\alpha=O(1)$ consecutive pairs are swapped. In this work we extend the model to allow $\alpha=\omega(1)$ and the pairs to be non-consecutive, so that the data can evolve faster and less locally. 

Dynamic data are also studied in the graph setting~\cite{Anagnostopoulos2012}. They considered two classical graph connectivity problems (path connectivity and minimum spanning trees) where the graph keeps changing over time and the algorithm needs to periodically probe the graph to maintain the path or spanning tree. In dynamic graph model, more results are proposed recently to show new ideas and experimental results.  Bahmani et al.~\cite{Bahmani} designed an algorithm to approximately compute the PageRank, and Zhuang et al. \cite{Zhuang} considered the influence maximization problem in dynamic social networks. On the other hand, Labouseur et al. \cite{Labouseur} and Ren \cite{Ren} dealt with the data structure and management issues which enable efficient query processing for dynamic graphs.
Moreland \cite{Moreland} experimentally verified the theoretical bounds predicted in~\cite{Anagnostopoulos}.

It is worth noting that our dynamic data model looks quite different point of view compared with computing under noisy information \cite{Ajtai,Feige}. In the research of noisy data, the key difficulty is the misleading information. But in our model, we treat the dynamical evolution and query results to be correct, while the difficulty comes from the fact that we cannot immediately learn all changes. We can only observe the real data by probing the partial information about the data. The key is to choose the probing strategies in order to approximate the real data with high probability.

In the algorithm community, there are many other models dealing with dynamic and uncertain data, from various points of view. However, none of these captures the two crucial aspects of our dynamic data scenario: the underlying data keeps changing, and the data exposes, by the probe model, only  limited information to the algorithm. For example, data stream algorithms \cite{Babcock} deal with a stream of data, typically with limited space, but algorithms can observe the entire data; local algorithms on graphs \cite{Bressan,Fujiwara} try to capture a certain property using a limited number of query to the underlying graphs, but the graphs are typically static; in online algorithms \cite{Albers}, though the data is coming over time and be processed without knowledge of the future data, the algorithms know all the data up to now; the multi-armed-bandit model \cite{Kuleshov} tends to optimize the total gain value in a finite exploration-exploitation process, while our framework concerns the performance of the algorithm at every time slot.

The rest of the paper is organized as follows. In Section~\ref{sec:pre}, we provide the formal definition of our models and the problems. Section~\ref{sec:main} is devoted to solving top-$k$-set problem and top-$k$-selection problem in the consecutive-swapping model. In Section~\ref{sec:Guassian}, the problems are investigated in the Gaussian-swapping model. Section~\ref{sec:con} concludes the paper.

\section{Preliminaries}\label{sec:pre}

We now present the formal definition of our evolving model of dynamic elements. Let $U$ be a set with $n$ elements $U=\{u_1,...,u_n\}$, and let $\mathcal{U}$ be the set of all possible total orders over $U$, that is, $\mathcal{U}=\{\pi: U\rightarrow \{1,2,\cdots,n\}\ |\ \pi(u_i)\neq \pi(u_j) \forall i\neq j\}$. For convenience, related to order $\pi$, we say $\pi^{-1}(1)$ is the largest element, and $\{\pi^{-1}(1),\cdots, \pi^{-1}(k)\}$ is the largest $k$ elements.

In the paper, we consider the evolving process that the order of $U$ will gradually change as time goes by. Let $\pi^t\in \mathcal{U}$ be the total order of $U$ at time step $t$. For every $t>1$, $\pi^t$ is obtained from $\pi^{t-1}$ by sequential $\alpha$ steps. In each step, we swap one random pair of consecutive elements. Here, $\alpha$ can be a constant or an integer function of $n$. We call such dynamic model \emph{consecutive-swapping model}. We further generalize the model to \emph{Gaussian-swapping model} by relaxing the locality condition that we only swap consecutive pairs in the evolution. We define the distance of $u_i\in U$ and $u_j\in U$ related to a total order $\pi\in \mathcal{U}$ as $d_{\pi}(u_i,u_j)=|\pi(u_j)-\pi(u_i)|$. In the Gaussian-swapping model, for every $t>1$, $\pi^t$ is still obtained from $\pi^{t-1}$ by sequential $\alpha$ steps, and in each step, one random pair of elements (not necessary consecutive) will be swapped according to the following distribution. We firstly choose $D$ according to Gaussian distribution, that is , $\Pr(D=d) = \beta e^{\frac{-d^2}{2}}$ where $\beta$ is the normalizing factor. And then uniformly randomly choose a pair with distance $D$. Thus, the probability to choose the pair $(u_i,u_j)$ is $\frac{\beta e^{\frac{-d^2}{2}}}{n-d}$ where $d$ is the distance of $u_i$ and $u_j$ related to the current total order.

Algorithms in this paper are assumed to have restricted access to the dynamic data. The only way the algorithm can probe the data is by comparative queries: at any time t, given a pair of elements $u_1,u_2\in U$, it can obtain whether $\pi^t(u_1)>\pi^t(u_2)$ or not. An algorithm can query at most one pair of elements at one time step.

Now we define $\mathcal{I}$-sorting problem for any index set $\mathcal{I}\subseteq \{1,\cdots,n\}$: find out not only all the elements whose ranks belong to $\mathcal{I}$, but also their ranks respectively. The concept of $\mathcal{I}$-sorting problem unifies both the sorting problem ($|\mathcal{I}|=n$) and the selection problem ($|\mathcal{I}|=1$). This paper mainly studies \emph{Top-$k$-selection  problem}, which is the special case of $\mathcal{I}$-sorting problem with $\mathcal{I}=\{1,2,...k\}$ for $1\leq k\leq n$. For simplicity, we use notation $\pi_k^t$ to represent the order restricted to largest $k$ elements in this paper. A closely-related problem, called \emph{Top-$k$-set problem}, is also studied. It requires to find out $\{(\pi^t)^{-1}(1),\cdots,(\pi^t)^{-1}(k)\}$ at every time t, not caring about the ranks.

We then define the measurement of performance of the algorithm.  In top-$k$-set problem, we want to maximize the probability that our output set is exactly the same as the true set for any sufficiently large $t$. In top-$k$-selection problem, besides guarantee that the top-$k$-set output by the algorithm is correct, we also try to minimize the Kendall tau distance between our output order and the true order. Suppose $\tilde{\pi}_k$ is the output of our algorithm and $\pi_k$ is the true order of largest $k$ elements. The Kendall tau distance $\mathrm{KT}(\tilde{\pi}_k,\pi_k)$ is defined as $\mathrm{KT}(\tilde{\pi}_k,\pi_k)=|\{(a,b): a<_{\tilde{\pi}_k}b \wedge b <_{\pi_k} a\}|$ where  $a<_{\pi} b$ if $a$ is smaller than $b$ according to permutation $\pi$. Note that the maximum Kendall tau distance is $\Theta(k^2)$.

Throughout our paper, one building block of our algorithms is the randomized quick-sort algorithm. We describe the randomized quick-sort algorithm briefly. It works as follows: $(1)$ Randomly and uniformly pick an element, called a pivot, from the array. $(2)$ Compare all elements with the pivot. Two sub-arrays are obtained: one consisting of all the elements smaller than the pivot, and the other consisting of the other elements except the pivot.  $(3)$ Recursively apply steps 1 and 2 to the two sub-arrays until all the sub-arrays produced are singleton.

\vspace{-0.2cm}
\section{Consecutive-swapping Model}\label{sec:main}
\vspace{-0.1cm}
In this section, we consider the top-$k$-set problem and the top-$k$-selection problem in the \emph{consecutive-swapping model}. For the top-$k$-set problem, Subsection~\ref{sec:ksetC} shows an algorithm which is error-free with probability $1-o(1)$ for arbitrary $k$. Subsection~\ref{sec:korderC} is devoted to the top-$k$-selection problem. It presents an algorithm that is optimal when $\alpha$ is constant or $k$ is small.

\vspace{-0.2cm}
\subsection{An algorithm for the \emph{Top-$k$-set} problem}\label{sec:ksetC}
\vspace{-0.2cm}
The basic idea is to repeatedly run quick-sort over the data $U$, extract the set of the largest $k$ elements by the resulting order, and output this set during the next run. But an issue should be addressed: since the running time of quick-sort is $\Omega(n\ln n)$, the set of the largest $k$ elements will change with high probability during the next run, leading to out-of-date outputs. Considering that the rank of every element does not change too much during the next quick-sort, a solution is to parallel sort a small subset of $U$ that contains the largest $k$ elements with high probability.

Specifically, the algorithm \emph{Top-$k$-set} consists of two interleaving algorithms (denoted by $QS_1$ and $QS_2$, respectively), each of which restarts once it terminates. In the odd steps, $QS_1$ calls quick-sort to sort $U$, preparing two sets $L$ and $C$. Actually, with high probability, $L$ consists of the elements that will remain among top-$k$ during the next run of $QS_1$, while $C$ contains all the elements that can be among top-$k$ in the period. $QS_2$ uses the sets $L$ and $C$ computed by the last run of $QS_1$ to produce the estimated set of largest $k$ elements. At time $t$, the output $\widetilde{T}_t$ of the algorithm is the set $\widetilde{T}$ computed by the previous run of $QS_2$.

\vspace{-0.2cm}
\begin{algorithm}[H]
 \textbf{Input:} A set of elements $U$\\
 \textbf{Output:} $\widetilde{T}$
 \begin{algorithmic}[1]
 \WHILE{(true)}
    \STATE \textbf{Execute in odd steps:}    /*$QS_1$*/
    \STATE $\tilde{\pi}\leftarrow $ quick\_sort($U$) in decreasing order
    \STATE $L\leftarrow\tilde{\pi}^{-1}(\{1,2,...,k-c\alpha\ln n \}) $ and $C\leftarrow\tilde{\pi}^{-1}(\{k-c\alpha\ln n +1,..., k+c\alpha\ln n\})$ /*The constant $c$ will be determined in the proof*/
    \STATE \textbf{Execute in even steps:}    /*$QS_2$*/
    \STATE\label{line:quicksortC}
          $\tilde{\pi}_C \leftarrow $  quick\_sort($C$) in decreasing order
    \STATE $\widetilde{T}\leftarrow L \bigcup\tilde{\pi}_C^{-1}(\{1,2,...,c\alpha\ln n\})$
 \ENDWHILE
 \caption{\emph{Top-$k$-set} } \label{fig:topk}
 \end{algorithmic}
\end{algorithm}

Two lemmas are needed for analyzing the performance of Algorithm \ref{fig:topk}.
\def\quicksort{
For any $\alpha=o(n)$, the running time of the standard randomized quick-sort algorithm in the \emph{consecutive-swapping model} is $O(n\ln{n})$ in expectation and with probability $1-o(1)$.
}
\begin{Lemma} \label{quicksort}
\quicksort
\end{Lemma}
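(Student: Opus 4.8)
The plan is to charge the running time to the number of comparisons, one per time step, and to bound this number by $\sum_{u\in U}\mathrm{depth}(u)$, where $\mathrm{depth}(u)$ counts the sub-arrays that ever contain $u$ (equivalently, the pivots against which $u$ is compared). In the static setting each $\mathrm{depth}(u)$ is $O(\ln n)$ with high probability because a uniformly random pivot shrinks $u$'s sub-array by a constant factor with constant probability; the only genuine difficulty here is that while a sub-array is being partitioned the order $\pi^t$ keeps drifting, so its two parts are not carved out by a single consistent order.

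The heart of the argument is the following balanced-split claim: when a sub-array $S$ of size $m$ is partitioned against a uniformly random pivot $p$, the part surviving with a fixed element $u$ has size at most $\frac78 m$ with probability at least $\frac13$. To see this, fix $p$ and order $S$ by the ranks it holds when the partition begins. Partitioning $S$ costs $m-1$ steps, hence $\alpha(m-1)$ elementary swaps; a uniformly random consecutive swap involves the single fixed element $p$ with probability $O(1/n)$, so the expected number of swaps touching $p$ during the partition is $O(\alpha m/n)$. An element $v\in S$ is classified to the wrong side of $p$ only if $v$ and $p$ reverse their relative order, which under consecutive swaps can happen only when $v$ and $p$ are directly exchanged; hence the number of misclassified elements is at most the number of swaps touching $p$, whose expectation $O(\alpha m/n)$ is $o(m)$ precisely because $\alpha=o(n)$. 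By Markov this count stays below $\frac18 m$ with probability $1-o(1)$, uniformly over all sub-array sizes. On the other hand, for at least half of the choices of $p$ the split under the start-order is balanced, i.e. the surviving part has size at most $\frac34 m$; since the pivot is drawn independently of the evolution, combining the two events yields a surviving part of size at most $\frac34 m+\frac18 m=\frac78 m$ with probability at least $\frac12-o(1)\ge\frac13$.

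Granting the claim, the conclusion follows as in the classical analysis. Along the chain of sub-arrays containing a fixed $u$ the pivots are fresh independent draws, so the indicators ``$u$'s sub-array shrinks by the factor $\frac78$'' dominate independent Bernoulli$(\frac13)$ variables; after $O(\ln n)$ successes $u$ is isolated, and a Chernoff bound makes $\mathrm{depth}(u)=O(\ln n)$ fail with probability at most $n^{-2}$. A union bound over the $n$ elements then gives $\sum_u\mathrm{depth}(u)=O(n\ln n)$, hence $O(n\ln n)$ comparisons, with probability $1-o(1)$; the same geometric decay yields $\mathbb{E}[\mathrm{depth}(u)]=O(\ln n)$ and thus $O(n\ln n)$ comparisons in expectation by linearity.

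The step I expect to dominate the work is the balanced-split claim, and within it the bookkeeping that the misclassification bound $O(\alpha m/n)$ may be invoked \emph{conditionally} at each of the $\Theta(\ln n)$ recursion levels and for every sub-array size from $n$ down to constants, so that it composes with the fresh pivot randomness rather than requiring a fragile union bound over the $\Theta(n)$ partitions. This is also exactly where the hypothesis $\alpha=o(n)$ is indispensable, since the ratio $O(\alpha/n)$ of expected misclassifications to $m$ ceases to be $o(1)$ once $\alpha=\Theta(n)$.
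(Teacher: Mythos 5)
Your proposal is correct and takes essentially the same route as the paper's own proof: both absorb the drift during a partition by relaxing the notion of a good split (the paper requires each part to retain at least $\gamma s/2$ elements, you require the surviving part to be at most $\tfrac{7}{8}m$), bound the misclassified elements by the number of swaps touching the pivot --- expected $O(\alpha m/n)$, which is exactly where $\alpha=o(n)$ is used, followed by Markov --- and then conclude with the standard constant-probability-shrinkage, Chernoff, and union-bound argument over elements/paths. No gap; your write-up is in fact more explicit than the paper's about why the misclassification bound composes conditionally with the fresh pivot randomness.
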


\begin{proof}
This proof is inspired by the proof of \cite[Proposition 3]{Anagnostopoulos} and the proof of the time complexity of the randomized quick-sort algorithm\cite{D.P.}.

We first recall the main steps in proving that the time complexity of the randomized quick-sort on static data is $O(n\ln n)$ with probability $1-o(1)$.
\begin{enumerate}
  \item A pivot is said to be good if it divides the array of size $s$ into two sub-arrays each having at least $\gamma s$ elements, where $0<\gamma<0.5$ is a constant. Thus the number of good pivots along a given path from the root to a leaf in the quick-sort execution tree is $O(\ln n)$.
  \item In a given path, a pivot is good with constant probability and bad (i.e., not good) also with constant probability. By using Chernoff bound, the length of a given path is $O(\ln n)$ with probability at least $1-o(1)$.

  \item By union bound, the lengths of all the paths are $O(\ln n)$ with probability $1-o(\frac{1}{n})$. Therefore, the running time is $O(n\ln n)$ with probability $1-o(1)$ and in expectation.
\end{enumerate}

In the \emph{consecutive-swapping} model, since the true order keeps evolving, a pivot that is good at the time it is chosen might divide the array of size $s$ into two parts one of which has fewer than $rs$ elements. However, with probability $1-O(\frac{\alpha}{n})$, each part will contain at least $\frac{\gamma s}{2}$ elements. As a result, in the \emph{consecutive-swapping} model, we redefine the term \emph{good pivot} to be a pivot which divides the corresponding array of size $s$ into two sub-arrays both having at least $\frac{\gamma s}{2}$ elements. If $\alpha=o(n)$, with at least a constant probability $\beta$ that a pivot is good. In any given path of length $L$, the number of good pivots is $O(\ln n)$ and the expectation of bad pivots is at most $(1-\beta) L$. Thus by Chernoff bound the number of bad pivots is also $O(\ln n)$ with probability $1-o(1)$. Therefore, steps 2 and 3 hold.

\end{proof}

\def\range{
In the \emph{consecutive-swapping model} with $\alpha=o(n)$, consider a run of the standard randomized quick-sort from time $t_0$ to $t_1$. For any $u_i\in U$, the number of incorrectly ordered pairs $(u_i,u_j)$ is $O(\alpha\ln{n})$ with probability $1-O(\frac{1}{n^3})$. Specifically, with probability $1-O(\frac{1}{n^2})$, $|\pi^{t_1}(u_l)-\tilde{\pi}^{t_1}(u_l)|=O(\alpha\ln{n})$ for any $l$.
}
\begin{Lemma}\label{range}
\range
\end{Lemma}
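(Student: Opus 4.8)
The plan is to reduce everything to one clean structural fact about consecutive swaps and then apply concentration. First I would invoke Lemma~\ref{quicksort} to work inside a horizon of $C'n\ln n$ steps that contains the whole run with probability $1-O(1/n^3)$ (after fixing constants), so that a total of $O(\alpha n\ln n)$ elementary swaps occur during $[t_0,t_1]$. The crucial observation is that a single consecutive swap flips the relative order of only the two elements it exchanges and leaves every other pair's relative order intact. Consequently, for a fixed $u_i$, its rank changes by exactly $1$ at a swap involving $u_i$ and is unchanged otherwise, so the total rank excursion of $u_i$—and the number of distinct elements ever swapped directly with $u_i$—is at most the number of swaps that involve $u_i$. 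Each elementary swap involves $u_i$ with probability at most $2/(n-1)$ regardless of the history, so this count is stochastically dominated by a binomial of mean $O(\alpha\ln n)$, and a Chernoff bound makes it $O(\alpha\ln n)$ with probability $1-O(1/n^4)$ (using $\alpha\ge 1$, so the exponent is $\Omega(\ln n)$). A union bound over all $n$ elements then produces a single good event $E$, namely ``every element's rank moves by at most $C\alpha\ln n$ throughout the run and the run has length $O(n\ln n)$,'' which holds with probability $1-O(1/n^3)$.

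Next, working on $E$, I would bound the inversions of a fixed $u_i$ by classifying each incorrectly ordered pair $(u_i,u_j)$. If $u_i$ and $u_j$ were ever swapped directly, then $u_j$ is one of the at most $C\alpha\ln n$ elements swapped with $u_i$, contributing $O(\alpha\ln n)$ pairs. Otherwise their true relative order never changed during $[t_0,t_1]$, yet quick-sort reports the reverse; since quick-sort fixed their order transitively through the pivot $p$ that separated them, I would show this forces $p$ to have crossed \emph{both} $u_i$ and $u_j$ ($u_i$ lies above $p$ at $u_i$'s comparison time while $u_j$ lies below $p$ at $u_j$'s comparison time, although $u_j$ stays on the same side of $u_i$ throughout). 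Crossing both elements means $p$'s rank traversed the entire gap between $u_i$ and $u_j$, so on $E$ this gap is $O(\alpha\ln n)$; hence every such $u_j$ lies in a rank window of width $O(\alpha\ln n)$ around $u_i$ at time $t_1$, and there are only $O(\alpha\ln n)$ of them. Summing the two types yields $O(\alpha\ln n)$ inversions for $u_i$. Finally, because the displacement $|\pi^{t_1}(u_l)-\tilde{\pi}^{t_1}(u_l)|$ is at most the number of inversions involving $u_l$, the displacement bound follows for all $l$ at once.

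The main obstacle is exactly this transitive case: the naive hope that ``an incorrectly ordered pair must have been swapped directly'' is false, since quick-sort never compares $u_i$ and $u_j$ directly but only compares each to the separating pivot $p$, and at two different times. The heart of the argument is the claim that an inversion produced purely through $p$ can arise only when $p$ sweeps past both elements, which ties the number of such $u_j$ to how far $p$'s rank moved and thus, on $E$, confines them to a short rank window. Once this is in place the bookkeeping is routine: the per-$u_i$ inversion bound holds with the probability of $E$, i.e. $1-O(1/n^3)$, and the all-elements displacement statement follows either directly from $E$ or, more crudely, by a union bound giving $1-O(1/n^2)$, with the hidden constant in $O(\alpha\ln n)$ enlarged to absorb the failure probabilities.
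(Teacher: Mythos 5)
Your proposal is correct, and it shares the paper's skeleton but diverges exactly at the step that carries the weight. Both proofs fix $u_i$, split its inversions into pairs whose true relative order changed during the run (which, in the consecutive-swapping model, is equivalent to having been swapped directly with $u_i$) and pairs whose order never changed yet were misordered transitively through a separating pivot, and both dispose of the first class by a binomial/Chernoff bound on the swaps hitting $u_i$. For the second class (the paper's $S_2$), the paper stays probabilistic: each such $u_j$ forces a swap to hit some pivot $u_l\in Q_i$ during its tenure, the quick-sort path structure gives $E[\sum_{u_l\in Q_i}X_l]=O(n)$, the pivot swaps are $\mathrm{Binomial}(X_l,2\alpha/n)$, and Chernoff yields $O(\alpha\ln n)$ with probability $1-O(\frac{1}{n^3})$. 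You instead make this step deterministic on your global event $E$: a pivot that misorders a never-reordered pair must cross both $u_i$ and $u_j$, so its excursion bound on $E$ caps the rank gap between them, confining every such $u_j$ to an $O(\alpha\ln n)$-wide rank window around $u_i$ whose size is then counted directly (to be airtight you should also note that a \emph{direct} comparison of $u_i$ and $u_j$, i.e.\ one of them serving as the pivot, cannot misorder a never-reordered pair, and that transferring the window to time $t_1$ uses the excursion bounds of $u_i$ and $u_j$ as well as of $p$; both are one-line additions available on $E$). The trade-off: the paper's route gives the sharper $E[|S_2|]=O(\alpha)$ and, because it localizes the offending swaps to the pivot's tenure, it yields Lemma~\ref{pivotstatic} as an immediate by-product, which Theorem~\ref{corollaryeeleven} later relies on; your crossing argument implies Lemma~\ref{pivotstatic} too, since the crossing must occur between the two comparison times and hence during the pivot's tenure, but that refinement would need to be stated explicitly. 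In exchange, your route needs only the single event $E$ (already required for the second half of the lemma), dispenses with the $O(n)$ total-pivot-time fact and the conditional binomial argument, and delivers all $n$ per-element bounds simultaneously with probability $1-O(\frac{1}{n^3})$, slightly stronger than the $1-O(\frac{1}{n^2})$ the paper obtains by a union bound. One caveat you share with the paper: the argument needs the run length $O(n\ln n)$ to hold with probability $1-O(\frac{1}{n^3})$, whereas Lemma~\ref{quicksort} literally states only $1-o(1)$; your remark about fixing constants is the right repair (the quick-sort analysis does give the stronger bound with larger constants), but it deserves an explicit sentence.
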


 \begin{proof}
 This proof is similar to the proof of \cite[Lemma 6]{Anagnostopoulos}.

Arbitrarily fix a $u_i\in U$. We first consider the set $S=\{u_j\in U|u_i<_{\pi^{t_1}} u_j,u_j<_{\tilde{\pi}^{t_1}} u_i\}$.
Of course, $S=S_1\bigcup S_2$, where $S_1=\{u_j\in U|\exists t\in [t_0, t_1),u_j<_{\pi^t} u_i\}$ and $S_2=\{u_j\in U|\forall t\in [t_0, t_1],u_i<_{\pi^t} u_j\}$.

Let $Z_1=|S_1|$. It is bounded by the number of swaps involving $u_i$. Since at each time step, $u_i$ was chosen to swap with probability at most $\frac{2\alpha}{n}$ and the running time is bounded by $O(n\ln n)$ in expectation and with high probability $1-o(1)$. We have that $E[Z_1]\leq 2c_1\alpha\ln n$ for some constant $c_1$. And by Chernoff bound, $\Pr(Z_1 \leq 4c_1\alpha\ln n)=1-O(\frac{1}{n^3})$.

For any $u_j\in S_2$, there exists a pivot $u_l$ which incorrectly places $u_i$ and $u_j$. Specifically, when $u_l$ is compared with $u_i$, $u_l<u_i$ (by the true order at that time), but when it is compared with $u_j$, $u_j < u_l$.

Let $Z_2=|S_2|$. It is thus bounded by the number of swaps that involve any pivot which is compared with $u_i$. We let $X_l$ be the number of steps that an element $u_l$ is a pivot and $Y_l$ be the number of swaps that involve $u_l$ while it is a pivot. Let $Q_i$ be the set that consists of all the pivots which are compared with $u_i$. By the properties of the quick-sort algorithm, we have
 $$E[\sum_{l:u_l\in Q_i} X_l]=O(n)$$ and
 $$\sum_{l:u_l\in Q_i} X_l\leq c_1n\ln n$$
 with probability $1-o(1)$ for a constant $c_1$. Since $Y_l \sim Binomial(X_l,2\alpha/n)$, we have $E[Z_2]\leq E[E[\sum_{u_l\in Q_i}Y_l|X_l]]=O(\alpha)$. By Chernoff bound, $\Pr(Z_2\leq 4c_1\alpha\ln n)=1-O(\frac{1}{n^3})$.

Altogether, $\Pr(|S|\leq 8 c_1\alpha\ln n)=1-O(\frac{1}{n^3})$. Likewise, we also have $\Pr(|S'|\leq 8 c_1\alpha\ln n)=1-O(\frac{1}{n^3})$, where $S'=\{u_j\in U|u_j<_{\pi^{t_1}} u_i,u_i<_{\tilde{\pi}^{t_1}} u_j\}$. The first part of the lemma thus holds.

Now prove the second part of the lemma. By union bound, for every $l$, $|\pi^{t_1}(u_l)-\tilde{\pi}^{t_1}(u_l)|\leq c\alpha\ln n$ holds with probability at least $1-O(\frac{1}{n^2})$, where $c=16 c_1$.
 \end{proof}

The following theorem shows that Algorithm \ref{fig:topk} is always error-free, up to high probability.
\def\topp{
Assume that $\alpha=o(\frac{\sqrt{n}}{\ln{n}})$. For any $1\leq k\leq n$, we have $Pr(\widetilde{T}_t=(\pi^t)^{-1}(\{1,2,...,k\}))=1-o(1)$, where $\widetilde{T}_t$ is the output of Algorithm \emph{Top-$k$-set} at time $t$, $\pi^t$ is the true order of $U$ at time $t$, and $t$ is such that the algorithm $QS_1$ has run at least once.
}
\begin{Theorem}  \label{The:theoremone}
\topp
\end{Theorem}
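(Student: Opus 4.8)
The plan is to follow the three interleaved time windows cut out by $QS_1$ and $QS_2$ and to argue that the entire stretch from the moment the relevant sets $L,C$ are produced until the output time $t$ is far too short for the rank-$k$ boundary to move. Fix $t$, and let $\tau_1$ be the completion time of the run of $QS_1$ whose sets $L,C$ feed the run of $QS_2$ that produced the output $\widetilde{T}=\widetilde{T}_t$, and let $\tau_2\le t$ be the completion time of that $QS_2$ run. By Lemma~\ref{quicksort} a full run of $QS_1$ on $U$ takes $O(n\ln n)$ steps, whereas a run of $QS_2$ sorts only the $O(\alpha\ln n)$ elements of $C$ and hence finishes in $O(\alpha\ln^2 n)$ steps. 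Since $QS_2$ restarts as soon as it terminates, the delay from $\tau_1$ to $t$ spans only a constant number of $QS_2$ runs, so $t-\tau_1=O(\alpha\ln^2 n)$. This shortness of the post-$QS_1$ window is exactly what the whole argument exploits.

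Next I would pin down $L$ and $C$ with Lemma~\ref{range}. Write $\tilde\pi$ for the order output by $QS_1$ and let $T^\star$ be the set of top-$k$ elements at time $\tau_1$. Lemma~\ref{range} (taken at $t_1=\tau_1$) provides a constant $c'$ with $|\tilde\pi(u)-\pi^{\tau_1}(u)|\le c'\alpha\ln n$ for all $u$, with probability $1-O(1/n^2)$. Choosing the buffer constant $c>c'$ and pushing ranks through this displacement bound forces $L=\tilde\pi^{-1}(\{1,\dots,k-c\alpha\ln n\})\subseteq T^\star$ and $T^\star\subseteq\tilde\pi^{-1}(\{1,\dots,k+c\alpha\ln n\})=L\cup C$. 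Since $L\cap C=\emptyset$, this gives $|T^\star\cap C|=k-|L|=c\alpha\ln n$, and reduces the theorem to two claims: that $T^\star$ is still the true top-$k$ set at time $t$, and that the top $c\alpha\ln n$ elements of the $QS_2$-order $\tilde\pi_C$ coincide with $T^\star\cap C$.

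The heart of the proof, and the main obstacle, is the single pair straddling rank $k$: a displacement bound by itself cannot rule out that this pair is misclassified. Here I would use the shortness of $[\tau_1,t]$. The top-$k$ set changes only when a swap exchanges the current rank-$k$ and rank-$(k+1)$ elements; over the $O(\alpha\ln^2 n)$ steps of $[\tau_1,t]$, each carrying $\alpha$ swaps, the expected number of such boundary swaps is $O(\alpha^2\ln^2 n/n)$, which is $o(1)$ precisely because $\alpha=o(\sqrt n/\ln n)$. Hence with probability $1-o(1)$ the top-$k$ set is constant, equal to $T^\star$, throughout $[\tau_1,t]$, and in particular equals the true top-$k$ set at $t$. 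This constancy also makes $QS_2$ error-free at the boundary: for every $u\in T^\star\cap C$ and $v\in C\setminus T^\star$ one has $\pi^{\tau}(u)<\pi^{\tau}(v)$ at every instant $\tau$ of the $QS_2$ run, so each comparison $QS_2$ performs between the two classes is consistent. A short case analysis on the quick-sort recursion (a direct comparison between $u$ and $v$, or their separation by a pivot that is itself either in $T^\star$ or not) then shows $u$ is ranked above $v$ in $\tilde\pi_C$. Thus no cross-boundary inversion occurs and the top $c\alpha\ln n$ of $\tilde\pi_C$ is exactly $T^\star\cap C$.

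Finally I would assemble the pieces: $\widetilde{T}=L\cup\tilde\pi_C^{-1}(\{1,\dots,c\alpha\ln n\})=L\cup(T^\star\cap C)=T^\star$, which equals the true top-$k$ set at $t$. A union bound over the finitely many failure events for this fixed $t$ — the running-time estimate of Lemma~\ref{quicksort}, the displacement bound of Lemma~\ref{range}, and the no-boundary-swap event — keeps the total error probability $o(1)$, as required. I expect the delicate point to be the timing bookkeeping that makes $[\tau_1,t]$ genuinely $O(\alpha\ln^2 n)$, since both the boundary-stability estimate and, through it, the correctness of $QS_2$ hinge on it; the remaining work is just feeding the two lemmas into the buffer of width $c\alpha\ln n$.
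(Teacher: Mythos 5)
Your proof breaks at the timing claim $t-\tau_1=O(\alpha\ln^2 n)$, and everything downstream hinges on it. The sets $L,C$ produced by a run of $QS_1$ are not superseded after a constant number of $QS_2$ runs: they remain ``the latest'' until the \emph{next} run of $QS_1$ completes, which takes $\Theta(n\ln n)$ steps. Every $QS_2$ run in that period --- and there are $\Omega\bigl(\frac{n}{\alpha\ln n}\bigr)$ of them --- is fed by the same $L,C$, so the $QS_2$ run that produced $\widetilde{T}_t$ may have started $\Theta(n\ln n)$ steps after $\tau_1$, not $O(\alpha\ln^2 n)$. Over a window of length $\Theta(n\ln n)$ the expected number of swaps of the rank-$k$/rank-$(k+1)$ pair is $\Theta(\alpha\ln n)=\omega(1)$, so with high probability the top-$k$ set is \emph{not} constant on $[\tau_1,t]$. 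This kills both conclusions you draw from constancy: that $T^\star$ (the top-$k$ set at $\tau_1$) is still the true top-$k$ set at time $t$, and that every cross-boundary comparison made by $QS_2$ is consistent. It also breaks the containment step: you establish $L\subseteq T^\star\subseteq L\cup C$ only at the single instant $\tau_1$, which no longer transfers to the time of the relevant $QS_2$ run once $T^\star$ drifts.

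The paper's proof avoids this by decoupling the two time scales. Over the long window (one full $QS_1$ run, $O(n\ln n)$ steps) it never asks for constancy of the top-$k$ set; it only proves the per-element statements that every $u\in L$ keeps true rank $\le k$ and every $u\in U\setminus(L\cup C)$ keeps true rank $>k$ throughout, and this is what the buffer is sized for: the constant is split as $c=c_1+c_2$, with $c_1\alpha\ln n$ absorbing the quick-sort error (Lemma~\ref{range}) and $c_2\alpha\ln n$ absorbing the rank drift over $O(n\ln n)$ steps --- your choice $c>c'$ leaves no room for this drift term. The boundary-stability argument (the rank-$k$ element not swapping) is then invoked only over windows where it genuinely holds with probability $1-o(1)$: the $QS_2$ run that produces the output, of length $O(\alpha\ln n(\ln\alpha+\ln\ln n))$, and the subsequent $QS_2$ run during which that output is served; this is where the hypothesis $\alpha=o(\frac{\sqrt{n}}{\ln n})$ is used. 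Your Lemma-\ref{range}-based buffer computation and your case analysis for quick-sort correctness across the boundary are sound in themselves; what is missing is adding the drift term to the containment over the long window and restricting the no-boundary-swap event to the two short $QS_2$ windows --- at which point you recover essentially the paper's proof.
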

\begin{proof}

Consider a run of $QS_1$, which starts at $t_0$ and ends at $t_1$. By Lemma~\ref{quicksort}, we have $t_1-t_0=O(n\ln{n})$ in expectation and with probability $1-o(1)$. By Lemma~\ref{range}, we have that $|\pi^{t_0}(u_l)-\tilde{\pi}^{t_0}(u_l)|\leq c_1\alpha\ln{n}$ with probability $1-O(\frac{1}{n^2})$ for every $u_l\in U$ and some constant $c_1$. During $[t_0,t_1]$, the rank of any element $u_l$ changes less than $c_2\alpha\ln{n}$ with probability $1-O(\frac{1}{n^2})$ for some constant $c_2$. Hence, we have $|\pi^t{(u_l)}-\tilde{\pi}^{t_0}(u_l)|\leq (c_1+c_2)\alpha\ln{n}$ with probability $1-o(1)$ at any time $t\in [t_0,t_1]$ and any $u_l\in U$. Letting $c=c_1+c_2$, we have $|\pi^t{(u_l)}-\tilde{\pi}^{t_0}(u_l)|\leq c\alpha\ln{n}$ with probability $1-o(1)$.

Note that $L$ contains all the elements $u_i$ such that $\tilde{\pi}^{t_0}(u_i)\leq k-c\alpha\ln{n}$. Then for any $t\in[t_0,t_1]$ and any $u_i\in L$, we have that $\pi^t(u_i)\leq k$ with probability $1-o(1)$. 
 Consider the set $R=U\backslash (L \bigcup C)=\{u_l: \tilde{\pi}^{t_0}(u_l)\geq k+c\alpha\mathrm{ln}n+1\}$. Then for any $t\in[t_0,t_1]$ and any $u_i\in R$, we have that $\pi^t(u_i)> k$ with probability $1-o(1)$. Therefore, $L\bigcup C$ contains all the elements whose true rank is no more than $k$ during $[t_0,t_1]$, with probability $1-o(1)$.

In line~\ref{line:quicksortC}, quick\_sort($C$) requires time $O(\alpha\ln{n}(\ln \alpha+\ln\ln n))$ with probability $1-o(1)$ (by Lemma~\ref{quicksort}). If the element of rank $k$(say $\tilde{u}$) at the beginning of line~\ref{line:quicksortC} does not swap during the execution of line~\ref{line:quicksortC}, the algorithm can always return the correct set of the largest $k$ elements at the end of quick\_sort($C$). 
This is because by the \emph{consecutive-swapping model}, if $\tilde{u}$ does not swap, the sign of $\pi^t(\tilde{u})-\pi^t(v)$ does not change for any $v\in U$. And the probability that $\tilde{u}$ swaps during the execution of line~\ref{line:quicksortC} is bounded by $O(\frac{\alpha^2\ln{n}(\ln\alpha+\ln\ln n)}{n})$. Hence, if $\alpha=o(\frac{\sqrt{n}}{\ln{n}})$, the algorithm can always return the correct set of the largest $k$ elements with probability $1-o(1)$. During the next round of sorting $C$(before a new set of the largest $k$ elements is computed), the set of the largest $k$ elements can remain the same with probability $1-O(\frac{\alpha^2\ln{n}(\ln\alpha+\ln\ln n)}{n})=1-o(1)$, implying the result.
\end{proof}

\vspace{-0.5cm}
\subsection{Algorithms for \emph{Top-$k$-selection}}\label{sec:korderC}
\vspace{-0.1cm}

Now we present an algorithm to solve the top-$k$-selection problem. The basic idea is to repeatedly run quick-sort over the data $U$, extract a small subset that includes the elements that can be among largest $k$ elements during the next run. To identify the exact largest $k$ elements together with their order, the small set is sorted in parallel and the order of the largest $k$ elements is produced accordingly. Like in designing the top-$k$-set algorithm, there is also an issue to address: since sorting the small set requires time $\Omega(k)$, the order of the largest $k$ elements will soon become out of date. Again note that the rank of every element does not change too much (actually, upper bounded by a constant) during sorting the small set, so the order of the largest $k$ elements can be regulated locally and maintain updated.

Specifically, Algorithm \ref{topkorder} consists of four interleaving algorithms ($QS_1$, $QS_2$, $QS_3$, and \emph{Local-sort}), each of which restarts once it terminates. At the $(4t+1)$-th time steps, $QS_1$ invokes a quick-sort on $U$, preparing a set $C$ of size $k+O(\alpha \ln n)$ which contains all the elements that are possible to be ranked among top-$k$ during the next run of $QS_1$ with high probability. At the $(4t+2)$-th time steps, $QS_2$ calls another quick-sort on the latest $C$ computed by $QS_1$, producing a set $P$ of size $k$. The set $P$ exactly consists of the largest $k$ elements of $U$ during the next run of $QS_2$ with high probability. At the $(4t+3)$-th time steps, the other quick-sort is invoked by $QS_3$ on the latest $P$ computed by $QS_2$, periodically updating the estimated order over $P$. The resulting order is actually close to the true order over $P$ during the next run of $QS_3$. Finally, at the $(4t)$-th time steps, an algorithm \emph{Local-sort} is executed on the estimated order over $P$ produced by the last run of $QS_3$, so as to correct the local errors of the order. At any time $t$, the output $\tilde{\pi}_k^t$ of Algorithm \ref{topkorder} is the last $\tilde{\pi}_k$ computed by \emph{Local-sort}.

\vspace{-0.2cm}
\begin{algorithm}
  \textbf{Input:} A set of elements $U$  \\
  \textbf{Output:} $\tilde{\pi}_k$
  \begin{algorithmic}[1]
  \STATE Let $t$ be the time
  \WHILE {(true)}
     \STATE \textbf{Execute in $t\%4=1$ steps} /*$QS_1$*/
     \STATE $\tilde{\pi} \leftarrow $ quick\_sort($U$) in decreasing order
     \STATE $C\leftarrow \tilde{\pi}^{-1}(\{1,2,...,k+c'\alpha\ln n\}) $ /*The constant $c'$ is determined in the proof*/
     \STATE \textbf{Execute in $t\%4=2$ steps} /*$QS_2$*/
     \STATE $\tilde{\pi}_C \leftarrow $  quick\_sort($C$) in decreasing order
     \STATE $P\leftarrow \tilde{\pi}_C^{-1}(\{1,2,...,k\})$
     \STATE \textbf{Execute in $t\%4=3$ steps} /*$QS_3$*/
     \STATE $\tilde{\pi}_P\leftarrow $ quick\_sort($P$) in decreasing order
     \STATE \textbf{Execute in $t\%4=0$ steps} /*Local-sort*/
     \STATE $\tilde{\pi}_k \leftarrow $  \emph{Local-sort}$(P,\tilde{\pi}_P, 4c+1)$/*The constant $c$ is determined in the proof*/
  \ENDWHILE
  \caption {\emph{Top-$k$-selection}}\label{topkorder}
  \end{algorithmic}
\end{algorithm}
\vspace{-0.5cm}

\begin{algorithm}[!h]
  \textbf{Input:} A set $P$; an order $\pi$ over $P$; an integer $c$  \\
  \textbf{Output:} $\tilde{\pi}$
  \begin{algorithmic}[1]
   \STATE  $m\leftarrow|P|$
    \STATE $B_1\leftarrow\pi^{-1}(\{1,2,...,c\})$ /* Define the first block */
    \STATE $\tilde{\pi}^{-1}(1)\leftarrow$ \emph{Maximum-Find}$(B_1)$
    \STATE $j=2$
    \WHILE {$(c+j-1 \leq m)$}
      \STATE $B_j \leftarrow (B_{j-1} \backslash \tilde{\pi}^{-1}(j-1)) \bigcup \pi^{-1}(c+j-1)$ /* Define the $j$-th block */
      \STATE $\tilde{\pi}^{-1}(j) \leftarrow$ \emph{Maximum-Find}$(B_j)$
      \STATE $j++$
    \ENDWHILE
    \STATE $B_e \leftarrow B_{j-1}$ /*Deal with the final block*/
    \WHILE { $|B_e| \geq 1$ }
      \STATE $\tilde{\pi}^{-1}(j) \leftarrow$ \emph{Maximum-Find}$(B_e)$
      \STATE $B_e \leftarrow B_e \backslash \tilde{\pi}^{-1}(j)$
      \STATE $j++$
    \ENDWHILE
  \caption{\emph{Local-sort}}\label{localsort}
 \end{algorithmic}
\end{algorithm}
\vspace{-0.3cm}
The main idea of Algorithm \ref{localsort} (\emph{Local-sort}) is to correct the order over $P$ block by block. Since block-by-block processing takes linear time, the errors can be corrected in time and little new errors will emerge during one processing period. Considering that the elements may move across blocks, it is necessary to make the blocks overlap. Actually, for each $j$, the element of the lowest rank in the $j$-th block is found, regarded as the $j$-th element of the final top-$k$ order, and removed from the block. The rest elements of the $j$-th block, together with the lowest-ranking element in $P$ (according to the latest order produced by $QS_3$) that has not yet been processed, forms the $(j+1)$-th block. The element of the lowest rank in each block is found by calling Algorithm \ref{largefind}, which runs one pass of sequential comparison. Both Algorithm \ref{localsort} and Algorithm \ref{largefind} are self-explained, so detailed explanation is omitted here.

\begin{algorithm}
  \textbf{Input:} $B$  \\
  \textbf{Output:} $u_{max}$
  \begin{algorithmic}[1]
    \STATE $u_{max} \leftarrow B(1)$
    \STATE $j=2$
    \WHILE {$( j \leq |B|)$}
      \IF { $u_{max} < B(j)$ }
        \STATE $u_{max} \leftarrow B(j)$
      \ENDIF
    \STATE $j++$
    \ENDWHILE
  \caption{\emph{Maximum-Find}}\label{largefind}
 \end{algorithmic}
\end{algorithm}

The following lemma will be used in analyzing performance of Algorithm \ref{topkorder}.
\def\pivotstatic{
In the \emph{consecutive-swapping model} with $\alpha=o(n)$, consider a run of the standard randomized quick-sort. If no swaps involve any element that is playing the role of a pivot, for any $u_i\in U$, the number of incorrectly ordered pairs $(u_i,u_j)$ is bounded by the number of swaps that involve $u_i$.
}
\begin{Lemma}\label{pivotstatic}
\pivotstatic
\end{Lemma}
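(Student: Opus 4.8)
The plan is to reuse the error decomposition from the proof of Lemma~\ref{range} and to show that the new hypothesis --- that no element is swapped while it serves as a pivot --- makes one of the two error sources vanish entirely. Fix $u_i\in U$ and let $S=\{u_j : u_i<_{\pi^{t_1}}u_j,\ u_j<_{\tilde\pi^{t_1}}u_i\}$ and $S'=\{u_j : u_j<_{\pi^{t_1}}u_i,\ u_i<_{\tilde\pi^{t_1}}u_j\}$ be the two families of incorrectly ordered pairs, so the quantity to bound is exactly $|S|+|S'|$. Following Lemma~\ref{range}, I split $S=S_1\cup S_2$ according to whether the true order of $u_i$ and $u_j$ changed during $[t_0,t_1]$ (this is $S_1$) or stayed fixed throughout (this is $S_2$), and I define $S_1',S_2'$ analogously for $S'$. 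It then suffices to prove $S_2=S_2'=\varnothing$ and to bound $|S_1|+|S_1'|$ by the number of swaps involving $u_i$.

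The heart of the argument is $S_2=\varnothing$. Take $u_j\in S_2$: by definition the true order of $u_i$ and $u_j$ is constant on $[t_0,t_1]$, yet quick-sort reverses it. Quick-sort fixes the output order of the pair at a single decisive event, which is either a direct comparison (one of them is the pivot) or a separating pivot $u_l$ that sends $u_i$ and $u_j$ into different sub-arrays. In the direct-comparison case the single query reflects the true order at that instant, which equals the unchanging true order at $t_1$, so no inversion can occur. In the separating-pivot case $u_l$ is compared with $u_i$ and with $u_j$ at two instants $s_1,s_2$ lying inside its partition pass; here I cash in the hypothesis. Since $u_l$ is not swapped while it is the pivot, its rank is frozen on $[s_1,s_2]$, and by the consecutive-swapping model neither $u_i$ nor $u_j$ can cross $u_l$ without swapping with $u_l$ itself. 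Together with the assumption that $u_i$ and $u_j$ retain their mutual order, the whole triple keeps a fixed relative arrangement across $[s_1,s_2]$, so $u_l$ separates $u_i$ and $u_j$ exactly according to their true order --- again no inversion. Both cases contradict $u_j\in S$, hence $S_2=\varnothing$, and symmetrically $S_2'=\varnothing$.

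It remains to bound $|S_1\cup S_1'|$, which I will do by a purely combinatorial observation about consecutive swaps: the relative order of $u_i$ and a fixed $u_j$ can flip only at a swap whose swapped pair is exactly $\{u_i,u_j\}$, because a single consecutive swap moves each rank by one and so can only interchange the two elements it touches. Consequently every $u_j\in S_1\cup S_1'$ --- whose true order relative to $u_i$ changed at least once --- is the partner of at least one swap involving $u_i$, and distinct elements $u_j$ are charged to disjoint swaps. Since $S_1$ and $S_1'$ are disjoint (they correspond to opposite final orders), this gives $|S|+|S'|=|S_1|+|S_1'|=|S_1\cup S_1'|$, which is at most the number of swaps involving $u_i$, as claimed.

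I expect the separating-pivot sub-case to be the main obstacle, since it is the only place where the pivot hypothesis must be used precisely: I have to argue that ``$u_l$ is never swapped during its pivot pass'' rules out the sole mechanism by which a pivot could misclassify a pair of fixed true order, namely a sign change of $\pi^t(u_l)-\pi^t(u_i)$ or $\pi^t(u_l)-\pi^t(u_j)$ between the two comparison instants. Making the crossing argument airtight --- that in the consecutive-swapping model an element can pass $u_l$ only by swapping with $u_l$ --- is the delicate point; the remaining steps are bookkeeping already present in the proof of Lemma~\ref{range}.
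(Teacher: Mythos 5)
Your proof is correct and takes essentially the same route as the paper: the paper's own (very short) proof just observes that under the no-pivot-swap hypothesis the set $S_2$ from the proof of Lemma~\ref{range} must be empty, so the error is exactly $|S_1|$ (plus its symmetric counterpart), which is bounded by the number of swaps involving $u_i$. Your write-up merely fills in the details the paper leaves implicit, namely the decisive-event analysis showing $S_2=S_2'=\varnothing$ and the injective charging of elements of $S_1\cup S_1'$ to swaps involving $u_i$.
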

\begin{proof}
This is in fact a by-product of the proof of Lemma 2.

Assume that no swaps involve any element that is playing the role of a pivot. In the proof of Lemma 2, we must have $S_2=\emptyset$. The lemma immediately follows.
\end{proof}

Now we analyze the performance of Algorithm \ref{topkorder}.

\def\corollaryeleven{
 Assume $\alpha=o(\frac{\sqrt{n}}{\ln{n}})$ and $k=O((\frac{n}{\alpha\ln{n}})^{1-\epsilon})$, where $\epsilon>0$. Let $\tilde{\pi}_k^t$ be the output of Algorithm \ref{topkorder} and $\pi_k^t$ be the true order over the largest $k$ elements at time $t$. For sufficiently large $t$, we have that:
 \begin{enumerate}
   \item If $k^2\alpha=o(n)$, $\Pr(\mathrm{KT}(\tilde{\pi}_k^t, \pi_k^t)=0)=1-o(1)$,
   \item If $k^2\alpha=\Theta(n)$, $\Pr(\mathrm{KT}(\tilde{\pi}_k^t, \pi_k^t)=0)=\Omega(1)$, and
   \item If $k^2\alpha=\omega(n)$, $\mathrm{KT}(\tilde{\pi}_k^t, \pi_k^t)=O(\frac{k^2\alpha}{n})$ with probability $1-o(1)$.
  \end{enumerate}
 }

 \begin{Theorem} \label{corollaryeeleven}
 \corollaryeleven
 \end{Theorem}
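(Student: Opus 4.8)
The plan is to separate the analysis into two essentially independent parts: \emph{set correctness} (the set $P$ that is handed to \emph{Local-sort} coincides with the true top-$k$ set throughout the relevant window) and \emph{order error} (the number of inversions in the committed order). The guiding idea is to reduce the target quantity $\mathrm{KT}(\tilde{\pi}_k^t,\pi_k^t)$ to the number of swaps that touch the top-$k$ region during a window of length $O(k)$, whose expectation is $\Theta(k^2\alpha/n)$; the three regimes in the statement then correspond exactly to this expectation being $o(1)$, $\Theta(1)$, and $\omega(1)$. Throughout I would use Lemma~\ref{quicksort} to fix the running times ($O(n\ln n)$ for $QS_1$ on $U$, $O(k\ln k)$ for $QS_3$ on $P$, and $O(k)$ for one pass of \emph{Local-sort}), which are what make the pipeline's output at time $t$ depend only on events in a short trailing window.

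First I would dispatch set correctness, which follows the pattern of Theorem~\ref{The:theoremone}. By Lemma~\ref{range}, $QS_1$ produces a set $C$ of size $k+c'\alpha\ln n$ that contains the true top-$k$ during the next run with probability $1-o(1)$, and $QS_2$ extracts from $C$ the exact top-$k$ set $P$; the $k$-th element fails to stay fixed only with probability $O(\alpha^2\ln n(\ln\alpha+\ln\ln n)/n)=o(1)$ for $\alpha=o(\sqrt{n}/\ln n)$, and over the $O((k+\alpha\ln n)\ln(k+\alpha\ln n))$ steps of one cycle the probability that the boundary element swaps is $O(\alpha k\ln k/n)=o(1)$. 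Conditioned on these events, $P$ equals the true top-$k$ set at time $t$, so it remains only to bound the order error.

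The heart of the proof, and the step I expect to be the main obstacle, is to show that the window $4c+1$ used by \emph{Local-sort} is large enough and that the committed order then incurs inversions \emph{only} from the evolution of $\pi^t$ inside an $O(k)$-length window. I would first establish a \emph{uniform constant displacement bound}: because $QS_3$ sorts a set of size merely $k$ in time $O(k\ln k)$, the number of swaps involving any fixed element during $QS_3$ has expectation $O(\alpha k\ln k/n)$, so by Lemma~\ref{pivotstatic} together with the scaled-down form of Lemma~\ref{range} the rank error of $\tilde{\pi}_P$ is $O(1)$, and the drift of any element between the completion of $QS_3$ and the current time (at most $O(k\ln k)$ steps) is likewise $O(1)$. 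The key point is that a single element's expected displacement is $O(\alpha k\ln k/n)=o(1)$, so for a suitable constant $c$ \emph{every} element of $P$ has displacement at most $c$ with probability $1-o(1)$; the union bound over the $k$ elements closes precisely because $k=O((n/(\alpha\ln n))^{1-\epsilon})$ forces $k\,(\alpha k\ln n/n)^{c+1}=o(1)$ as soon as $c>1/\epsilon-2$. This is exactly why the constant $c$ and the polynomial gap $\epsilon$ appear in the hypotheses.

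Given the displacement bound, I would prove by induction on the block index $j$ that each call to \emph{Maximum-Find} returns the element of current true rank $j$: the window $B_j$ defined from the (stale) order $\tilde{\pi}_P$ still contains that element since its displacement is at most $c<4c+1$, while \emph{Maximum-Find} issues \emph{fresh} comparisons at the current time, so the committed rank-$j$ element is correct with respect to the true order at the instant block $j$ is processed. Consequently the only inversions in the committed order, measured against $\pi_k^t$, are those created by swaps inside the top-$k$ region during the union of the \emph{Local-sort} execution interval and the holding interval until its next completion, an interval of total length $O(k)$. Since there are $k-1$ internal consecutive pairs, each swapped at rate $O(\alpha/n)$ per step, we get $E[\mathrm{KT}(\tilde{\pi}_k^t,\pi_k^t)]=O(k^2\alpha/n)$. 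The three regimes then follow: when $k^2\alpha=o(n)$ this expectation is $o(1)$, so $\mathrm{KT}=0$ with probability $1-o(1)$ by Markov's inequality; when $k^2\alpha=\Theta(n)$ the probability that no harmful swap occurs in the window is bounded below by a constant, giving $\Pr(\mathrm{KT}=0)=\Omega(1)$; and when $k^2\alpha=\omega(n)$ the harmful swaps form a sum of near-independent Bernoulli indicators over the time steps, so a Chernoff bound yields $\mathrm{KT}=O(k^2\alpha/n)$ with probability $1-o(1)$.
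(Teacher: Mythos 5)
Your Steps for set correctness and for the uniform constant displacement bound match the paper's Steps 1--3 (same Chernoff argument, same role of the hypothesis $k=O((\frac{n}{\alpha\ln n})^{1-\epsilon})$), and your Step 5 regime analysis matches the paper's. The gap is in the heart of the argument, exactly where you predicted it would be. Your induction --- ``each call to \emph{Maximum-Find} returns the element of current true rank $j$'' --- is only valid on the event that \emph{no} swaps occur inside $P$ (and none involving the running $u_{max}$) during the entire \emph{Local-sort} window: once a swap occurs, the committed prefix no longer coincides with the current true top-$(j-1)$, the composition of every subsequent block $B_{j'}$ changes, and the inductive hypothesis collapses. This is adequate for regimes 1 and 2, where the swap-free event has probability $1-o(1)$ and $\Omega(1)$ respectively, but it gives nothing in regime 3, where with probability $1-o(1)$ there are $\omega(1)$ swaps in the window. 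The sentence beginning ``Consequently the only inversions \dots'' is a non sequitur: from ``swap-free implies correct output'' one cannot conclude ``each swap creates $O(1)$ inversions''; a priori a single early swap could cascade, since one wrongly committed element perturbs every later block, and your proposal contains no argument ruling this out.

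The paper closes precisely this gap with a pairwise-stability argument (its Step 4): conditioning only on the event that no element swaps while it is $u_{max}$ (probability $1-O(\frac{k\alpha}{n})$), it proves that any pair $u_i,u_j$ whose true relative order is constant throughout $[t_2,t_1]$ is ordered correctly by \emph{Local-sort}, via a three-case analysis on whether $u_i,u_j$ share a block and on the sign of $r_i-r_j$. The contrapositive then bounds $\mathrm{KT}(\tilde{\pi}_k^{t_1},\pi_k^{t_1})$ deterministically by the number $Y$ of swap events inside $P$, with no cascading possible, and this inequality is what feeds your (otherwise correct) Step 5 in all three regimes. The nontrivial content is Case 3 of that analysis (stale ranks in the wrong order, pair never swaps, elements never share a block), which has no counterpart in your proposal. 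A second, minor omission: \emph{Maximum-Find} spans many time steps, so its comparisons are made against different true orders; without conditioning on ``no swap involves the current $u_{max}$,'' the returned element need not be the maximum of the block at any single instant, so even your base case needs this event made explicit.
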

We first sketch the main idea of the proof. The proof consists of five steps. First, with high probability, the set $C$ produced by $QS_1$ includes all the largest $k$ elements during the next run of $QS_1$. Second, with high probability, the set $P$ produced by $QS_2$ exactly consists of the largest $k$ elements during the next run of $QS_2$. Third, with high probability, the true rank of any element remains close to that estimated by $QS_3$, during the next run of $QS_3$. Fourth, with high probability, the error of the order computed by \emph{Local-sort} is upper-bounded by the swaps during the run of \emph{Local-sort}. And fifth, proper upper bound of the swaps during a run of \emph{Local-sort} is presented. These steps immediately lead to the theorem.

\begin{proof}
We will show that the theorem holds at $t_0$, where $t_0$ is an arbitrary time step after $QS_1$ in Algorithm \ref{topkorder} has run twice.

Consider the last completed \emph{Local-sort} before $t_0$, which starts at $t_2$ and terminates at $t_1$. It is easy to see that $t_0-t_1=O(k)$ and $t_1-t_2=O(k)$. Also note that the input of \emph{Local-sort} at $t_2$ comes from the latest completed $QS_3$ before $t_2$, which starts at $t_4$ and finishes at $t_3$. By Lemma 1, with probability $1-o(1)$, $t_0-t_3=O(k\ln k)$ and $t_3-t_4=O(k\ln k)$. Likewise, the input of $QS_3$ at $t_4$ comes from the latest completed $QS_2$ before $t_4$, which starts at $t_6$ and finishes at $t_5$, and with probability $1-o(1)$, $t_0-t_5=O((k+\alpha\ln n)\ln(k+\alpha\ln n))$ and $t_5-t_6=O((k+\alpha\ln n)\ln(k+\alpha\ln n))$. Also, the input of $QS_2$ at $t_6$ comes from the latest completed $QS_1$, which starts at $t_8$ and terminates at $t_7$, and with probability $1-o(1)$, $t_0-t_7=O(n\ln n)$ and $t_7-t_8=O(n\ln n)$. The relationship between $t_0$ through to $t_8$ is illustrated in Fig. \ref{fig:times-general}.
\begin{figure}
\centering
\includegraphics[scale=0.4]{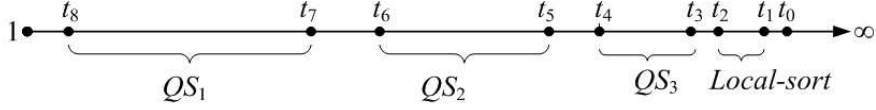}
\caption{Relationship between $t_0$ through to $t_8$}\label{fig:times-general}
\end{figure}

\textbf{Step 1}. We show that with probability $1-o(1)$, for all $u_i\in U$ and all $t\in [t_7, t_0]$, $|\pi^t(u_i)-\tilde{\pi}^{t_7}(u_i)|\leq c'\alpha\ln n$, where $c'$ is a constant. Due to the union bound, it is sufficient to show that for each $u_i\in U$, with probability $1-o(\frac{1}{n})$, $|\pi^t(u_i)-\tilde{\pi}^{t_7}(u_i)|\leq c'\alpha\ln n$ for all $t\in[t_7,t_0]$. Note that $|\pi^t(u_i)-\tilde{\pi}^{t_7}(u_i)|\leq |\pi^t(u_i)-\pi^{t_7}(u_i)|+|\pi^{t_7}(u_i)-\tilde{\pi}^{t_7}(u_i)|$, and $\Pr(|\pi^{t_7}(u_i)-\tilde{\pi}^{t_7}(u_i)|\leq c_1\ln n)=1-O(\frac{1}{n^2})$ for some constant $c_1$ by Lemma 2. Hence, we just need to show that $\Pr(\forall t\in [t_7, t_0], |\pi^t(u_i)-\pi^{t_7}(u_i)|\leq c_2\ln n)=1-O(\frac{1}{n^2})$ where $c_2$ is a constant, and let $c'=c_1+c_2$. Actually this follows from Chernoff bound and the fact that during $[t_7,t_0]$, the rank of $u_i$ changes $O(\ln n)$ in expectation.

As a result, event $\mathcal{E}_1$ happens with probability $1-o(1)$, where $\mathcal{E}_1$ means that the set $C$ produced at $t_7$ by $QS_1$, denoted as $C^{t_7}$, contains all the largest $k$ elements during $[t_7, t_0]$.

\textbf{Step 2}. We show that event $\mathcal{E}_2$ happens with probability $1-o(1)$, where $\mathcal{E}_2$ means that the set $P$ produced by $QS_2$ at $t_5$, denoted as $P^{t_5}$, exactly consists of the largest $k$ elements of $U$ during $[t_5, t_0]$. Considering the semantics of $QS_2$, it follows from two facts. On the one hand, $\mathcal{E}_1$ happens with probability $1-o(1)$. On the other hand, event $\mathcal{E}_3$ happens with probability $1-O(\frac{(k+\alpha\ln n)\alpha\ln(k+\alpha\ln n)}{n})=1-o(1)$, where $\mathcal{E}_3$ means that the $k$-th element at time $t_6$ doesn't swap during $[t_6, t_0]$.

Hence, hereunder we'll assume that $\mathcal{E}_1$, $\mathcal{E}_2$, and $\mathcal{E}_3$ all happen.

\textbf{Step 3}. Let the set $P$ at $t_3$ be $P^{t_3}$, where $P^{t_3}=P^{t_5}$, and $\tilde{\pi}_P^{t_3}$ be the order of $P^{t_3}$. We now show that for a constant $c$, with probability $1-o(1)$, for all $u_j\in P^{t_3}$ and all $t\in [t_3, t_0]$, $|\pi^t(u_j)-\tilde{\pi}_P^{t_3}(u_j)|\leq c$. To see why, first consider $Y_i$, the number of times that the element $(\pi^{t_4})^{-1}(i)$ swaps during $[t_4, t_0]$. Since $t_0-t_4=O(k\ln k)$, $E[Y_i]=O(\frac{\alpha k\ln k}{n})$. By Chernoff bound, for any constant $c_3\geq e$, $\Pr(Y_i\geq c_3)\leq (E[Y_i])^{c_3}$. Because $k=O((\frac{n}{\alpha\ln{n}})^{1-\epsilon})$, there is a constant $0<\beta<1$ such that $\frac{\alpha k\ln k}{n}=O(k^{\beta-1})$. Arbitrarily choose a constant $c_4>\max\{(1-\beta)^{-1}, e\}$, and we have $\Pr(\exists i\in [k], Y_i\geq c_4)\leq k(O(\frac{k\alpha\ln k}{n}))^{c_4}=O(k^{1+(\beta-1)c_4})=o(1)$. Consequently, $\Pr(\forall i\in [k], Y_i<c_4)=1-o(1)$. On the other hand, since $t_3-t_4=O(k\ln k)$, with probability $1-O(\frac{\alpha k\ln k}{n})=1-o(1)$, no element swaps when it is a pivot in $QS_3$ during $[t_4, t_3]$. By Lemma \ref{pivotstatic}, with probability $1-o(1)$, for all $u_j\in P^{t_3}$, $|\pi^{t_3}(u_j)-\tilde{\pi}_P^{t_3}(u_j)|\leq c_4$. As a result, given $c=2c_4$, with probability $1-o(1)$, for all $u_j\in P^{t_3}$ and all $t\in [t_3,t_0]$, $|\pi^t(u_j)-\tilde{\pi}_P^{t_3}(u_j)|\leq |\pi^t(u_j)-\pi^{t_3}(u_j)|+|\pi^{t_3}(u_j)-\tilde{\pi}_P^{t_3}(u_j)|\leq c$.

\textbf{Step 4}. Focus on the behavior of \emph{Local-sort} during $[t_2,t_1]$. Assume that during $[t_2,t_1]$, no element swaps while it is $u_{max}$. We claim that for any $u_i, u_j\in P^{t_3}$, if $\pi^{t}(u_i)<\pi^{t}(u_j)$ for all $t\in[t_2,t_1]$, then $\tilde{\pi}_k^{t_1}(u_i)<\tilde{\pi}_k^{t_1}(u_j)$. The claim can be proved in three cases. Assume that $\pi^{t}(u_i)<\pi^{t}(u_j)$ for all $t\in[t_2,t_1]$. Let $r_i\triangleq\tilde{\pi}_P^{t_3}(u_i)$ and $r_j\triangleq\tilde{\pi}_P^{t_3}(u_j)$.
\begin{itemize}
  \item \textbf{Case 1}: $u_i, u_j\in B_l$ for some $l$. Let $l_0$ be the largest such $l$. This implies that $u_{max}$ of $B_{l_0}$ is either $u_i$ or $u_j$. Since no element swaps while it is $u_{max}$, the $u_{max}$ computed by Local-sort cannot be $u_j$. This means that $u_{max}$ of $B_{l_0}$ is $u_i$ and $\tilde{\pi}_k^{t_1}(u_i)=l_0<\tilde{\pi}_k^{t_1}(u_j)$.
  \item \textbf{Case 2}: $r_i<r_j$ and there is no $l$ such that $u_i, u_j\in B_l$. There must be some $r_i-4c \leq l<r_j-4c$ such that  $u_{max}$ of $B_l$ is $u_i$. Hence, $\tilde{\pi}_k^{t_1}(u_i)=l<r_j-4c\leq \tilde{\pi}_k^{t_1}(u_j)$.
  \item \textbf{Case 3}: $r_i>r_j$ and there is no $l$ such that $u_i, u_j\in B_l$. We have $r_i-r_j<r_i-\pi^{t}(u_i)+\pi^{t}(u_j)-r_j$ for any $t\in[t_2,t_1]$, so $r_i-r_j<2c$. For any integer $m$, let $V_m= (\tilde{\pi}_P^{t_3})^{-1}(\{1,2,...,m+4c\})$. For all $r_j-4c\leq l<r_i-4c$, since $B_l\subseteq V_l$ and $|V_l\setminus V_{r_j-6c-1}|\leq 4c$, one gets $|B_l\bigcap V_{r_j-6c-1}|\geq 1$. For any $u'\in B_l\bigcap V_{r_j-6c-1}$, for all $t\in[t_2,t_1]$, $\pi^{t}(u')\leq \tilde{\pi}_P^{t_3}(u')+c\leq r_j-c-1 <\pi^{t}(u_j)$. By the proof of case 1 (with $u_i$ replaced by $u'$), $u_{max}$ of $B_l$ is not $u_j$ for any $r_j-4c\leq l<r_i-4c$. This means that $u_j,u_i\in B_{r_i-4c}$, which is a contradiction.
\end{itemize}

Since the probability that $u_{max}$ doesn't swap during $[t_2,t_1]$ is $1-O(\frac{k\alpha}{n})$, the above claim means that $\Pr(\mathrm{KT}(\tilde{\pi}_k^{t_1},\pi_k^{t_1})\leq Y)=1-o(1)$, where $Y$ is the number of swaps occurred in $P$ during $[t_2,t_1]$.

\textbf{Step 5}. Since $\Pr(Y=0)=(1-\frac{k}{n})^{O(k)\alpha}$, $\Pr(Y=0)=1-o(1)$ if $k^2\alpha=o(n)$, and $\Pr(Y=0)=\Omega(1)$ if $k^2\alpha=\Theta(n)$. When $k^2\alpha=\omega(n)$, note that $E[Y]=O(\frac{k^2\alpha}{n})$, so $Pr(Y=O(\frac{k^2\alpha}{n}))=1-o(1)$ by Chernoff bound. Actually, we still have these results if $[t_2, t_0]$ is considered.

Altogether, the theorem holds.
\end{proof}
We also analyze the lower bounds of the performance of any top-$k$-selection algorithms.
\def\corollaryone{
If $\alpha=o(n)$. And if $k=\Omega(\sqrt{\frac{n}{\alpha}})$, for arbitrary $t>k$, no algorithms can estimate the order of the largest $k$ elements such that $\Pr(\tilde{\pi}_k^t\equiv \pi_k^t)=1-o(1)$.
}

\begin{Theorem} \label{corollaryoneone}
\corollaryone
\end{Theorem}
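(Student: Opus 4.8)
The plan is to prove a lower bound by an indistinguishability (coupling) argument: I will exhibit, with probability bounded away from $0$, an adjacent pair among the top-$k$ elements whose relative order at time $t$ is information-theoretically hidden from the algorithm, which forces a constant probability of error and hence rules out success probability $1-o(1)$.

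First I would fix the final time $t$ and look back over a window $W=[t-m,t]$ of length $m=\lfloor (k-1)/2\rfloor$; since $t>k$ this window exists. Because the algorithm issues at most one comparison per step, it makes at most $m$ direct queries during $W$. There are $k-1$ internal adjacent position-pairs $(i,i+1)$ among the top $k$, so by pigeonhole at least $k-1-m=\Omega(k)$ of them are never directly queried during $W$; call these the \emph{neglected} pairs. The algorithm's information about the order of a neglected pair therefore predates $W$, i.e. is more than $m=\Theta(k)$ steps stale.

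The crux is a structural observation that rules out transitive inference. If $u,v$ are adjacent in the current order, then every third element $w$ compares identically against $u$ and against $v$, so the \emph{only} query whose answer can depend on the relative order of $u$ and $v$ is a direct $u$-vs-$v$ comparison. Consequently, for a neglected pair $p$ the entire query transcript (including the algorithm's internal coins and all adaptive choices) is independent of the indicator $F_p$ that $p$'s order at $t$ differs from its order at $t-m$. Writing $\theta_p=\Pr(F_p=1)$ and letting $g_p$ be the order the algorithm outputs for $p$ (a function of the transcript, hence independent of $F_p$), correctness on $p$ requires $g_p=F_p$, which happens with probability at most $\max(\theta_p,1-\theta_p)=1-\theta_p$ since a parity argument gives $\theta_p\le \tfrac12$. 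Hence
$$\Pr(\tilde\pi_k^t\equiv\pi_k^t)\le \Pr(\text{correct on all neglected pairs})\le\prod_{p}(1-\theta_p)\le \exp\Big(-\sum_p\theta_p\Big),$$
using that the $F_p$ over distinct neglected position-pairs are essentially independent. It then remains to show $\sum_p\theta_p=\Omega(1)$: in one step a fixed consecutive pair is swapped with probability $\approx \alpha/n$, so over the $\Theta(k)$ steps of $W$ it is swapped an expected $\Theta(k\alpha/n)$ times; since $\alpha=o(n)$ and $k=\Theta(\sqrt{n/\alpha})$ give $k\alpha/n=\sqrt{\alpha/n}=o(1)$, swaps are rare and $\theta_p=\Pr(\text{odd number of swaps})\approx k\alpha/n$. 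Summing over the $\Omega(k)$ neglected pairs yields $\sum_p\theta_p=\Omega(k^2\alpha/n)$, which is $\Omega(1)$ precisely when $k=\Omega(\sqrt{n/\alpha})$, so the display gives $\Pr(\tilde\pi_k^t\equiv\pi_k^t)\le e^{-\Omega(1)}=1-\Omega(1)$.

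The main obstacle I expect is making the two independence claims fully rigorous. The transitivity observation must be upgraded from positions to the actual elements occupying them throughout $W$, since elements drift across positions as neighbouring pairs swap; and the mutual independence of the $F_p$ is not literally true, because a single sub-step swaps exactly one pair (inducing mild negative correlation) and adjacent position-pairs interact. I would handle this either by restricting attention to well-separated position-pairs so that the relevant swap events are genuinely independent, or by replacing the product bound with a second-moment (Paley--Zygmund) estimate on the number of neglected pairs that flip, which needs only the variance to be $O((\sum_p\theta_p)^2)$. It is precisely the transcript-independence (coupling) step that lets the bound hold uniformly over all adaptive and randomized algorithms rather than just memoryless ones.
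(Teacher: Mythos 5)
Your high-level strategy---use the one-comparison-per-step budget over a window of length $\Theta(k)$ so that some adjacent top-$k$ pair undergoes an unobserved swap, then argue indistinguishability---is the same idea that drives the paper's proof. But the route you take, an exponential/product bound over all non-queried adjacent pairs, rests on two claims that you yourself flag as obstacles and that are genuine gaps rather than technicalities. First, the transcript-independence claim is self-referential: the set of ``neglected'' pairs is itself a function of the transcript (the algorithm chooses queries adaptively based on past answers), so ``for a neglected pair $p$ the transcript is independent of $F_p$'' is not a well-posed independence statement; making it rigorous requires a deferred-decisions coupling of the kind the paper deploys only in the proof of Theorem~\ref{theoremsix}. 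Moreover, your notion of neglected (``never directly queried'') is too weak for your structural observation to apply: if one element $u$ of the pair is queried against third parties, those answers can reveal swaps involving $u$ (once a swap of $u$ with an outside element breaks adjacency, third-party comparisons do discriminate between the two elements of the pair), so the conditional law of $F_p$ given the transcript is biased. What is needed is the stronger condition that \emph{neither} element of the pair is ever touched by a query during the window. Second, the bound $\prod_p(1-\theta_p)$ needs the flips $F_p$ to be (nearly) independent \emph{jointly with the transcript}; your proposed repairs (well-separated pairs, Paley--Zygmund) address correlations among the $F_p$'s but not their correlation with the transcript, which is the harder half.

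The paper avoids all of this with a single-event argument that you could adopt to close the gaps: first reduce to $k=c_1\sqrt{n/\alpha}$ (a correct top-$k$ order restricted to its prefix is a correct top-$k'$ order, so the lower bound is monotone in $k$); then consider the event that \emph{exactly one} swap occurs among the largest $k$ elements during a window of length $c_2\sqrt{n/\alpha}$ and that neither endpoint of that swap is among the $O(\sqrt{n/\alpha})$ elements the algorithm touches in the window. This event has constant probability, it makes every top-$k$ adjacency static except the one swapped pair (so your elements-versus-positions drift problem disappears), and on it the transcript is identical whether or not the swap occurred, while both alternatives retain constant probability. Hence the algorithm errs with constant probability, with no parity argument, no product bound, and no independence among pairs required. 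Your aggregate approach is salvageable in principle, but as written the independence steps it leans on are exactly where the difficulty of the theorem lives.
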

\begin{proof}
Let us focus on the special case where $k=c_1\sqrt{\frac{n}{\alpha}}$ for some constant $c_1$, since it trivially implies the general case $k=\Omega(\sqrt{\frac{n}{\alpha}})$.

Consider the time interval $I=[t-c_2\sqrt{\frac{n}{\alpha}}, t]$, where $c_2 < c_1$. Let $p_1$ be probability that only one pair swaps among the largest $k$ elements during the interval $I$. Then $p_1=\frac{(k-1)(c_2\sqrt{n\alpha})}{n-1}(1-\frac{k}{n-1})^{c_2\sqrt{n\alpha}-1}$. When $n$ approaches infinity, $p_1=c_1c_2 e^{-c_1c_2}$. During the interval $I$, since at each time step only one pair is observed by the algorithm, therefore at most $2c_2\sqrt{\frac{n}{\alpha}}$ elements are checked. Let $p_2$ be the probability that only one pair swaps among the largest $k$ elements during the interval $I$ and the pair is not observed. Then $p_2 \geq \frac{p_1(c_1\sqrt{n/\alpha}-1-4c_2\sqrt{n/\alpha})}{c_1\sqrt{n/\alpha}-1}=\frac{p_1(c_1-4c_2)}{c_1}$. Let $c_1=5c_2$, then $p_2\geq c_2^2 e^{-5c_2^2}$. Thus the algorithm cannot tell the largest $k$ elements with correct order at time step $t_2$ with probability at least $p_2$, implying the result.
\end{proof}
\def\theoremsix{
Assume that $k=o(n)$ and $k=\omega(\sqrt{n})$, $\alpha=O(1)$, and $t>k/8$. For any algorithm $A$ solving top-$k$-selection problem, let $\tilde{\pi}_k^t$ be the output of $A$ and $\pi_k^t$ be the true order of $A$ at time $t$. If $(\tilde{\pi}_k^t)^{-1}(\{1,2,...k\})=(\pi_k^t)^{-1}(\{1,2,...k\})$, then $\mathrm{KT}(\tilde{\pi}_k^t,\pi_k^t)=\Omega(\frac{k^2}{n})$ in expectation and with probability $1-o(1)$.
}

\begin{Theorem} \label{theoremsix}
\theoremsix
\end{Theorem}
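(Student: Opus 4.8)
The plan is to show that, in the final stretch before time $t$, the top-$k$ order accumulates $\Omega(k^2/n)$ fresh adjacent transpositions that the algorithm provably cannot have tracked, so they survive as inversions in $\tilde{\pi}_k^t$. This is the high-probability amplification of Theorem~\ref{corollaryoneone}, where the same phenomenon killed \emph{exact} recovery with constant probability. Concretely, I would fix a small constant $\theta\le 1/8$ and work in the window $I=[t-\theta k,\,t]$, which is legitimate precisely because $t>k/8$. Since $\alpha=O(1)$ and each swap lands on one of the $k-1$ consecutive pairs inside the top-$k$ with probability $\Theta(k/n)$, the number $S$ of swaps occurring strictly between two top-$k$ elements during $I$ is a sum of indicators with $E[S]=\Theta(\alpha\theta k\cdot k/n)=\Theta(k^2/n)$. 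The hypothesis $k=\omega(\sqrt{n})$ makes this mean $\omega(1)$, so a Chernoff bound gives $S=\Theta(k^2/n)$ with probability $1-o(1)$; this is exactly where $k=\omega(\sqrt{n})$ is used, and it is what turns the constant-probability statement into a $1-o(1)$ one. The hypothesis $k=o(n)$ is used to guarantee \emph{sparsity}: each element and each position is touched $o(1)$ times in expectation over $I$, so with probability $1-o(1)$ a $1-o(1)$ fraction of these $S$ swaps are \emph{isolated}, i.e. involve two elements participating in no other swap of $I$. I restrict attention to isolated swaps; each sits at a distinct position, stays adjacent and in its flipped order until $t$, and hence is a distinct pair that becomes a genuine inversion whenever $A$ misreports it. The assumption that the top-$k$ \emph{set} is reported correctly is what makes every such miscounted pair an honest contribution to $\mathrm{KT}(\tilde{\pi}_k^t,\pi_k^t)$.

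Call an isolated swap at pair $\{a,b\}$, occurring at time $\tau$, \emph{caught} if $A$ queries $\{a,b\}$ somewhere in $(\tau,t]$ and \emph{missed} otherwise. The decisive point, and the reason a naive counting bound fails, is that $A$ issues $|I|=\theta k$ queries, vastly more than the $\Theta(k^2/n)=o(k)$ swaps, so query budget alone never forces a miss; the sole obstruction is that $A$ cannot predict \emph{which} adjacent pairs flip, and the flip of an isolated adjacent pair $\{a,b\}$ is invisible to every query except a query of $\{a,b\}$ itself (any other comparison has an order fixed by the ranks and unchanged by the swap). I would model this as blind search: the isolated swaps are balls dropped into uniformly random bins (the $\approx k$ positions), at most one per bin, and $A$ gets one probe per step. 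Conditioned on $A$'s transcript, the positions of the missed balls remain exchangeable, so any single probe hits a missed ball's bin with probability at most $S/k$; summing over the $\theta k$ probes yields an expected number of caught balls at most $\theta S$, whence the expected number of missed balls is at least $(1-\theta)S=\Omega(k^2/n)$.

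To convert a miss into an error for a general adaptive, randomized $A$, I would use a measure-preserving coupling that permutes the swapped position among the missed (hence indistinguishable) positions: redirecting an isolated swap from one uncaught position $\{a,b\}$ to another uncaught position $\{a',b'\}$ leaves $A$'s entire transcript and output unchanged while flipping the truth at both pairs, so $A$ must be wrong at a given uncaught pair in at least one of the two equiprobable worlds, i.e. with probability $\ge 1/2$. Since distinct isolated swaps concern disjoint element pairs, these error indicators are essentially independent, and by linearity $E[\mathrm{KT}(\tilde{\pi}_k^t,\pi_k^t)]=\Omega(k^2/n)$; a final Chernoff bound over the $\Omega(k^2/n)$ near-independent indicators, together with the high-probability lower bound on $S$, upgrades this to $\mathrm{KT}(\tilde{\pi}_k^t,\pi_k^t)=\Omega(k^2/n)$ with probability $1-o(1)$.

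The hard part is the indistinguishability/exchangeability claim underlying both the miss bound and the per-miss error: rigorously, that conditioning on $A$'s transcript leaves the set of uncaught isolated positions exchangeable and the redirection coupling measure-preserving, for an \emph{arbitrary} adaptive strategy. The subtlety is that the transcript is itself correlated with the evolution through the caught swaps, so I must argue that observing the caught swaps yields no information about the positions of the uncaught ones; establishing this, and controlling the mild dependence introduced by discarding the non-isolated swaps (where $k=o(n)$ is again essential), are the two places demanding genuine care. Everything else, the counts, the sparsity, and the two concentration steps, is Chernoff-and-union-bound routine.
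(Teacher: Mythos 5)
Your overall plan is the same as the paper's: work in a window of length $\Theta(k)$ ending at $t$ (legitimate since $t>k/8$), show that $\Theta(k^2/n)$ swaps land inside the top-$k$ with probability $1-o(1)$ (this is where $k=\omega(\sqrt{n})$ enters, exactly as you say), argue that the algorithm can have observed only a small fraction of them, and convert each unobserved swap into an inversion with constant probability. The gap is precisely the step you flag as hard and leave unresolved: the claim that, conditioned on the transcript of an arbitrary adaptive randomized algorithm, the unobserved swap locations remain exchangeable and your redirection coupling is measure-preserving. As stated this is not a routine verification, and your route to it has two concrete problems. First, the balls-in-bins bound (``each probe hits a missed ball's bin with probability at most $S/k$'') is circular: ``missed'' is defined in terms of the very probes you are summing over, and the queries an adaptive algorithm issues after a swap are correlated with what it has already caught, so the per-probe hit probability cannot be bounded this way without further argument. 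Second, the redirection coupling that moves a swap from position $p$ to $p'$ changes which elements occupy which ranks from time $\tau$ onward; since the model draws each later swap as a random pair of consecutive \emph{ranks}, the two coupled scenarios no longer match up element-wise, and preservation of the transcript is exactly what needs proof, not an assumption.

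The paper resolves this point with the deferred-decisions device of Anagnostopoulos et al., which is the missing idea in your proposal. Each step's randomness is re-modeled as: draw two disjoint random pairs of consecutive elements, then flip a fair coin to decide which of the two actually swaps. A decision is fixed (its coin revealed) only when one of its elements becomes \emph{touched}, i.e.\ queried by the algorithm or involved in another fixed decision; all other decisions stay deferred. Over the window $[t-\frac{k}{100},t]$ at most $\frac{6k}{100}$ elements get touched, so at least a constant fraction (more than half) of the $X=\Theta(\frac{k^2}{n})$ top-$k$ swaps belong to decisions still deferred at time $t$. Conditioned on the entire transcript, each deferred decision is still a fair coin between two disjoint untouched pairs; the output is a fixed function of the transcript, so for each deferred decision the algorithm misorders at least one of the two pairs in at least one of the two equally likely outcomes. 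This yields the expectation bound, and since the deferred coins are independent given the transcript, Chernoff upgrades it to probability $1-o(1)$. Your counting, your use of $t>k/8$, and the role you assign to $k=o(n)$ (sparsity, so flipped pairs are disjoint and do not cancel) all survive; what must be replaced is the exchangeability-across-positions argument by this within-step pairing, which is what makes the indistinguishability claim provable against adaptive adversarial querying.
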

\begin{proof}

This proof is inspired by and is similar to that of \cite[Theorem 1]{Anagnostopoulos}. The proof is presented here not only to make this paper self-contained, but also due to subtle differences.

For convenience we assume that $\alpha=1$. And the proof can be modified slightly to prove the case where $\alpha >1$.

Consider the time interval $I=[t-\frac{k}{100},t]$. Let $X$ be the number of times that swaps occur among the largest $k$ elements. Then $E[X]=\Theta(\frac{k^2}{n})$. By Chernoff bound, we have that $X=\Theta(\frac{k^2}{n})$ with probability $1-o(1)$.

We use the idea of deferred decisions in\cite{Anagnostopoulos}. In the \emph{consecutive-swapping model}, one random pair of consecutive elements is chosen to swap in every step. The process is almost equivalent to the following process: two disjoint pairs of consecutive elements are picked uniformly and randomly and then one of these two pairs is selected to swap at random. This process is called nature's decision. The idea of deferred decisions is that we fix all the decisions that includes at least one of the elements observed by the algorithm and defer the rest. Since each swap will influence what pairs are for future swaps, we also need to fix both swaps that overlap. Therefore, the deferred decisions are a random set of disjoint pairs that is not involved with any elements observed by the algorithm or nature's fixed decisions so far. An element is said to be \emph{touched} if it is observed by the algorithm or is involved in any fixed decisions.

Initially, both the set of touched elements and the set of deferred decisions are empty. In every time step, a pair of elements is selected by the algorithm to compare. We mark each of these elements touched if it is previously untouched. For each deferred decision, we flip a coin with the suitable probability to determine if the decision contains that element newly marked as touched. If so, the decision is fixed and one pair in the decision is picked to swap with the appropriate probability. The decision may involve other previously untouched elements and we mark all those elements touched. And again, we continue to determine if any of the deferred decisions involves any of those elements. The process continues until we check  all the newly marked touched elements and all the deferred decisions. Then the comparison of the pair picked by the algorithm is answered. Next, the nature makes a new decision. We flip a coin to determine whether the new decision involves any touched elements. If it does, we mark the decision as fixed,  update the set of touched elements, and iterate as before. Also, a coin is flipped to determine whether the new decision overlaps with any of the deferred decisions. If it does, we fix both of the decisions and choose one pair in each decision to swap with appropriate probability.

At the end of interval $I$, the set of all the touched elements is of size at most $\frac{6k}{100}$. Thus at least $(1-\frac{12}{100})k$ pairs don't include any touched elements. Therefore, the number of deferred decisions is at least $(1-\frac{12}{100})^2X > \frac{1}{2}X$. Thus $\mathrm{KT}(\tilde{\pi}_k^t,\pi_k^t)=\Omega(\frac{k^2}{n})$ in expectation and with probability $1-o(1)$.

\end{proof}
By Theorems \ref{corollaryoneone} and \ref{theoremsix}, we know that $\Theta(n)$ is the critical point of $k^2\alpha$, and that \textbf{generally it is impossible to improve Algorithm \ref{topkorder}} even if $k^2\alpha=\omega(n)$. The term \emph{critical point} means that top-$k$-selection problem can be solved error-free with probability $1-o(1)$ if and only if $k^2\alpha=o(n)$.

\section{Gaussian-swapping Model}\label{sec:Guassian}
\vspace{-0.2cm}

This section is devoted to extending the algorithms for the consecutive-swapping model to the Gaussian-swapping model. We focus on the special case where $\alpha$ is a constant, and still assume that at each time step only one pair of elements can be queried.

\def\ddefinition{In the \emph{Gaussian-swapping Model} $U$, at any time $t$, define \emph{$u_i$-neighbourhood }$\triangleq\{u_j \in U: |\pi^t(u_j)-\pi^t(u_i)|\leq 4\sqrt{\ln n}\}$.
}
\begin{Definition}
\ddefinition
\end{Definition}
\def\Gaussianquick{
In the \emph{Gaussian-swapping Model}, the standard randomized quick-sort can terminate in $O(n\ln n)$ time in expectation and with probability $1-o(1)$.
}
\begin{Lemma}  \label{gaussianquick}
\Gaussianquick
\end{Lemma}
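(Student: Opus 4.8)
The plan is to mirror the proof of Lemma~\ref{quicksort}, replacing the unit displacement of the consecutive model by the (whp bounded) displacement of a Gaussian swap. The crucial new ingredient is the $u_i$-neighbourhood defined above: although a single Gaussian swap may exchange two elements at an arbitrary rank-distance $D$, the tail of the Gaussian makes large jumps overwhelmingly unlikely, so that every swap effectively acts locally.

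\textbf{Step 1 (Localization).} First I would show that with probability $1-o(1)$ every swap performed during the whole execution moves the two involved elements by at most $4\sqrt{\ln n}$ ranks. Indeed $\Pr(D\ge 4\sqrt{\ln n})=\sum_{d\ge 4\sqrt{\ln n}}\beta e^{-d^2/2}=O(n^{-8})$. Since each partition strictly shrinks the sub-array, quick-sort performs at most $O(n^2)$ comparisons — hence occupies at most $O(n^2)$ time steps — regardless of how the order evolves, so a union bound over all $O(\alpha n^2)$ swaps gives failure probability $O(\alpha n^{-6})=o(1)$. Conditioned on this event, each swap only permutes elements lying in a common neighbourhood of width $O(\sqrt{\ln n})$.

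\textbf{Step 2 (Redefining a good pivot).} The main work is to bound the number of elements that cross the pivot $v$ during the $O(s)$ steps of partitioning a size-$s$ sub-array. The key observation is that the relative order of $v$ and an element $u$ flips in a given step only if the corresponding swap involves $v$ or $u$ (a distance-$D$ transposition flips the relative order of its two moved endpoints with one another and with each of the $D-1$ elements strictly between them). Conditioned on Step 1, a swap involving $v$ flips at most $4\sqrt{\ln n}$ relations $\pi^t(v)-\pi^t(u)$, and a swap merely spanning $v$ flips exactly two; summing the per-step expectations over $O(s)$ steps, the expected number of elements whose classification at query time differs from their selection-time side is $O(\frac{s\alpha\sqrt{\ln n}}{n})$. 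I would then call a pivot \emph{good} if at its selection time it splits its sub-array into two parts each of size $\ge\gamma s$; by Markov's inequality, since $\alpha\sqrt{\ln n}=o(n)$ for constant $\alpha$, with probability $1-O(\frac{\alpha\sqrt{\ln n}}{n})=1-o(1)$ at most $\gamma s/2$ elements cross $v$, so both resulting sub-arrays retain size $\ge\gamma s/2$.

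\textbf{Step 3 (Conclude).} With this redefinition the argument of Lemma~\ref{quicksort} transfers verbatim: along any root-to-leaf path of the recursion tree the number of good pivots is $O(\ln n)$, because each shrinks the sub-array by the constant factor $1-\gamma/2$; a pivot is good with constant probability, as a constant fraction of pivot choices split well at selection time and such a split survives with probability $1-o(1)$; a Chernoff bound then makes every path have length $O(\ln n)$ with probability $1-o(1/n)$; and a union bound over the $\le n$ leaves yields total running time $O(n\ln n)$ with probability $1-o(1)$ and in expectation. The hard part will be Step~2: correctly counting the sign-changes of $\pi^t(v)-\pi^t(u)$ now that a single swap can simultaneously flip up to $\Theta(\sqrt{\ln n})$ such relations, and checking that this extra $\sqrt{\ln n}$ factor remains negligible against the $\gamma s$ slack for all sub-array sizes $s$. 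Everything else is a routine transcription of the static and consecutive-model analyses.
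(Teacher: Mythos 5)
Your proposal is correct and follows essentially the same route as the paper's proof: localize all swaps to distance at most $4\sqrt{\ln n}$ with high probability, bound the expected number of elements crossing a pivot during a size-$s$ partition by $O\bigl(\frac{s\sqrt{\ln n}}{n}\bigr)$ via the same two-case decomposition (swaps involving the pivot contributing $O(\sqrt{\ln n})$ each, swaps merely spanning it contributing $O(1)$ each), apply Markov's inequality to redefine a good pivot as one whose parts retain $\frac{\gamma s}{2}$ elements, and then transfer the remaining argument of Lemma~\ref{quicksort}. The only cosmetic difference is that you union-bound the localization event over the worst-case $O(n^2)$ comparisons rather than over a run of length $O(n\ln n)$, which neatly avoids any circularity in invoking the running-time bound being proved.
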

\begin{proof}
This proof is similar to Lemma \ref{quicksort}. Lemma \ref{quicksort} relies on the key fact that with probability $1-o(1)$, a pivot that is good at the time it is chosen will divide an array of size $s$ into two sub-arrays each having size at least $\frac{\gamma s}{2}$. Though this fact is an easy observation in the consecutive-swapping model, it is harder in the Gaussian-swapping model. In the following, we will only prove this fact in detail, since the other part of the proof of Lemma \ref{quicksort} can be directly used.

In the Gaussian-swapping model, $f(d)=\beta e^{-d^2/2}$. We define $p_j$ to be the probability that one pair with distance $j$ swaps and $q_i$ be the probability that the element of rank $i$ is chosen to swap. We have
$$q_i=\sum_{j=1}^{i-1}p_j+
\sum_{j=1}^{n-i}p_j.$$
Since $p_j=\frac{\beta e^{-j^2/2}}{n-j}$, it's easy to see that $p_j$ decreases as $j$ increases. Hence, $q_1$ is the minimum and $q_{\lfloor{\frac{n+1}{2}}\rfloor}$ is the maximum and $2q_1 > q_{\lfloor{\frac{n+1}{2}}\rfloor}$. Thus, $q_i=\Theta(\frac{1}{n})$.

Since
\begin{align*}
\sum_{i\geq 4\sqrt{\ln n}} f(i) & < \frac{1}{n^8}\times n        \\
           &= \frac{1}{n^7},
\end{align*}
with probability $1-o(\frac{1}{n^5})$, pairs with distance greater than $4\sqrt{\ln n}$ do not swap in a run of quick-sort. Our following proof is based on this assumption.

Now let's consider one process that a pivot $u_i$ is chosen to partition the array of size $s$. Suppose that $u_i$ is good when it is chosen. This means that if no swaps happen while $u_i$ is a pivot, the array will be divided into two sub-arrays $A$ and $B$ each having at least $\gamma s$ elements. Let $A'$ and $B'$ be the two sub-arrays that are actually obtained. Let $C=B'\backslash B$. The elements of $C$ must come from the following two types of events:
\begin{enumerate}
\item The pivot $u_i$ swaps with another element. Each such event contributes at most $4\sqrt{\ln n}$ elements to $C$.

\item A pair within $u_i$-neighbourhood swaps. Each such event contributes at most $1$ element to $C$.
\end{enumerate}
Let $X$ be the number of elements of $C$ resulting from the first type of events. Then $E[X]=O(\frac{s\sqrt{\ln n}}{n})$. By Markov inequality, $\Pr(X\geq \frac{\gamma s}{4})=o(1)$.

Let $Y$ be the number of elements of $C$ resulting from the second type of events. Since elements with distance greater than $4\sqrt{\ln n}$ do not swap in a run with probability $1-o(1)$, $E[Y]=O(\frac{s \sqrt{\ln n}}{n})$. Therefore, by Markov inequality, $\Pr(Y\geq \frac{\gamma s}{4})=o(1)$.

Altogether, $\Pr(|C|\geq \frac{\gamma s}{2})=o(1)$, meaning that $\Pr(|A'|\geq \frac{\gamma s}{2}, |B'|\geq \frac{\gamma s}{2})=1-o(1)$.
\end{proof}

\def\rremark1{
 In the \emph{Gaussian-swapping model}, consider a run of the standard randomized quick-sort from time $t_0$ to $t_1$. For any $u_i\in U$, the number of incorrectly ordered pairs $(u_i,u_j)$ is $O(\ln^{1.5}n)$ with probability $1-O(\frac{1}{n^3})$. Specifically, with probability $1-O(\frac{1}{n^2})$, $|\pi^{t_1}(u_l)-\tilde{\pi}^{t_1}(u_l)|=O(\ln^{1.5}n)$ for any $l$.
}
\begin{Lemma}  \label{rremark1}
\rremark1
\end{Lemma}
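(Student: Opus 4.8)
The plan is to follow the proof of Lemma~\ref{range} (the consecutive-swapping analogue) almost verbatim, inserting the one modification forced by the Gaussian model: in that model a single swap can move an element by more than one rank. The starting point, imported from the proof of Lemma~\ref{gaussianquick}, is that with probability $1-o(1/n^5)$ no pair at distance exceeding $4\sqrt{\ln n}$ is ever swapped during the whole run; I would condition on this event throughout, which is harmless since $1-o(1/n^5)$ dominates the $1-O(1/n^3)$ target. Under this conditioning every swap moves each involved element by at most $4\sqrt{\ln n}$ ranks, so a single swap can reverse the relative order of a fixed element with at most $4\sqrt{\ln n}$ others. This extra factor $4\sqrt{\ln n}$ is exactly what separates the Gaussian bound $O(\ln^{1.5}n)$ from the consecutive bound $O(\alpha\ln n)$ with $\alpha=O(1)$.

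Fix $u_i$ and, as in Lemma~\ref{range}, write $S=\{u_j: u_i<_{\pi^{t_1}}u_j,\ u_j<_{\tilde{\pi}^{t_1}}u_i\}=S_1\cup S_2$, where $S_1$ collects the $u_j$ that were smaller than $u_i$ at some time of the run and $S_2$ the $u_j$ that stayed larger than $u_i$ throughout. Each element of $S_1$ witnesses a reversal of $u_i$'s order with it, and every such reversal is produced by a swap that either \emph{involves} $u_i$ or \emph{straddles} it (one endpoint below, one above $u_i$); unlike the consecutive case, straddling swaps must now be counted separately, but both kinds occur with per-step probability $O(1/n)$ and each reverses $u_i$'s order with at most $4\sqrt{\ln n}$ elements. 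Since the run lasts $O(n\ln n)$ steps with $\alpha=O(1)$ and $u_i$ is hit with probability $q_i=\Theta(1/n)$ (established in Lemma~\ref{gaussianquick}), the number of relevant swaps is $O(\ln n)$ in expectation, hence $O(\ln n)$ with probability $1-O(1/n^3)$ by Chernoff, giving $|S_1|=O(\ln^{1.5}n)$.

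For $S_2$ I keep the quick-sort bookkeeping of Lemma~\ref{range}: each $u_j\in S_2$ is misplaced by some pivot $u_l\in Q_i$ that must move from below $u_i$ to above $u_j$ between its two comparisons, which forces a swap of $u_l$ while it is a pivot, and such a swap can be charged with at most $4\sqrt{\ln n}$ elements of $S_2$ (those $u_l$ crosses). Thus $|S_2|\le 4\sqrt{\ln n}\sum_{l:u_l\in Q_i}Y_l$ with $Y_l\sim\mathrm{Binomial}(X_l,\Theta(1/n))$; using $\sum_{l\in Q_i}X_l=O(n)$ in expectation and $\le c_1 n\ln n$ with high probability, one gets $\sum_l Y_l=O(\ln n)$ with probability $1-O(1/n^3)$, whence $|S_2|=O(\ln^{1.5}n)$. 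Adding the two bounds, together with the symmetric estimate for $S'=\{u_j: u_j<_{\pi^{t_1}}u_i,\ u_i<_{\tilde{\pi}^{t_1}}u_j\}$, proves the first claim with probability $1-O(1/n^3)$. A union bound over all $n$ elements, combined with the fact that $|\pi^{t_1}(u_l)-\tilde{\pi}^{t_1}(u_l)|$ is bounded by the number of pairs misordered with $u_l$, then yields the second claim with probability $1-O(1/n^2)$.

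I expect the main obstacle to be the $S_2$ estimate: one must argue carefully that a single pivot swap contributes at most $4\sqrt{\ln n}$ \emph{new} misorderings relative to $u_i$ (charging each error to exactly one pivot on $u_i$'s recursion path rather than double-counting across pivots), and that conditioning on the ``no long swap'' event does not disturb the inherited static quick-sort comparison-count bound $\sum_{l\in Q_i}X_l=O(n)$. The $S_1$ part, the additional accounting for straddling swaps, and the final union bound are routine once the $4\sqrt{\ln n}$ per-swap displacement bound is established.
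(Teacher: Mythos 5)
Your decomposition and most of your accounting coincide with the paper's own proof: it likewise splits the misordered pairs into a set whose order with $u_i$ flipped at some time during the run and a set whose true order never flipped, charges the first to swaps occurring at or near $u_i$, and handles the second through the quick-sort pivot bookkeeping with $X_l$, $Y_l\sim\mathrm{Binomial}(X_l,\Theta(1/n))$ and $E[\sum_{l:u_l\in Q_i}X_l]=O(n)$. Your $S_1$ treatment, including the separate count of straddling swaps, is sound. The genuine gap is in $S_2$: the claim that a pivot $u_l$ misplacing a never-flipped pair $(u_i,u_j)$ ``forces a swap of $u_l$ while it is a pivot'' is false in the Gaussian model. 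What the two inconsistent comparisons actually force is only that the relative order of $u_l$ and $u_j$ (equivalently, of $u_l$ and $u_i$) changes between the two comparison times. In the consecutive model this does require a swap involving $u_l$, because a swap of a consecutive pair cannot carry an element across a third one; that is why the proof of Lemma~\ref{range} needs no further case. But in the Gaussian model a swap involving $u_j$ (or $u_i$) alone can carry it across a stationary pivot $u_l$, i.e.\ a swap that straddles the pivot without touching it. Your charging scheme $|S_2|\le 4\sqrt{\ln n}\,\sum_l Y_l$, where $Y_l$ counts only swaps involving the pivot, misses these events entirely --- which is exactly the phenomenon you were careful to include on the $S_1$ side.

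The paper closes this hole with an explicit fourth case: in addition to swaps involving the pivot $u_l$ (each charged with at most $4\sqrt{\ln n}$ elements of $B$), it counts swaps within the $u_l$-neighbourhood not involving $u_l$ (each such swap straddles $u_l$ and is charged with at most $2$ elements, its endpoints). Since per time step such a swap occurs with probability $O(\sqrt{\ln n}/n)$, and $\sum_{l:u_l\in Q_i}X_l\le c_5 n\ln n$ with probability $1-o(1)$, a Chernoff bound shows this extra contribution is also $O(\ln^{1.5}n)$ with probability $1-O(\frac{1}{n^3})$, so the stated bound survives. Your argument becomes correct once this case is added; as written, the $S_2$ charging is incomplete and rests on an assertion that fails precisely because swaps in this model are non-consecutive.
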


\begin{proof}

By Lemma \ref{gaussianquick}, the running time is $O(n\ln n)$ in expectation and with probability $1-o(1)$. We will use two facts in the proof of Lemma \ref{gaussianquick}. First, in each time step the probability that any element is chosen to swap is $\Theta(\frac{1}{n})$. Second, pairs with distance greater than $4\sqrt{\ln n}$ will not swap with probability $1-o(1)$ during the run.

Similar to the proof of \cite[Lemma 6]{Anagnostopoulos}, we partition the set of incorrectly ordered pairs into two sets, $A$ and $B$.

$A=\{u_j: u_i <_{\tilde{\pi}^{t_1}} u_j,u_i >_{{\pi}^{t_1}} u_j, \exists t\in [t_0, t_1):u_i <_{{\pi}^{t}} u_j  \} \bigcup \{u_j: u_i >_{\tilde{\pi}^{t_1}} u_j,u_i <_{{\pi}^{t_1}} u_j, \exists t\in [t_0, t_1):u_i >_{{\pi}^{t}} u_j  \}$

$B=\{u_j:u_i <_{\tilde{\pi}^{t_1}} u_j, \forall t\in [t_0, t_1]: u_i >_{{\pi}^{t}} u_j \} \bigcup \{u_j:u_i >_{\tilde{\pi}^{t_1}} u_j, \forall t \in [t_0, t_1]: u_i <_{{\pi}^{t}} u_j \}   $

$A$ can be covered by two cases:

\textbf{Case 1}: swaps involving with $u_i$,

\textbf{Case 2}: swaps among $u_i$-neighbourhood not involving with $u_i$.

 In each time step, a swap of case 1 can contribute at most $4\sqrt{\ln n}$ elements to set $A$. Recall that in each time step, the probability that $u_i$ is chosen to swap is $\Theta(\frac{1}{n})$. By the Chernoff bound, the number of such elements is bounded by $c_1\ln^{1.5}n$ in expectation and with probability $1-O(\frac{1}{n^3})$ for some constant $c_1$.

On the other hand, in each time step, a swap of case 2 results in at most 2 elements to set $A$. Recall that in every time step, the probability that a swap occurs among $u_i$-neighborhood is $O(\frac{\sqrt{\ln n}}{n})$. Using the Chernoff bound, the number of such elements is bounded by $c_2\ln^{1.5}n$ in expectation and with probability $1-O(\frac{1}{n^3})$ for some constant $c_2$.

For set $B$, consider the case $u_i < u_j$ according to the true order during $[t_0, t_1]$( the case $u_i>u_j$ is similar). In order for element $u_i$ and $u_j$ to be incorrectly ordered, there must be some pivot $u_l$ such that $u_j<u_l$ and $u_l<u_i$ when the comparisons occur, respectively. So, the relative rank of $u_l$ and $u_i$, $u_l$ and $u_j$ has to change at least once. Only the following two kinds of swaps can contribute to set $B$:

\textbf{Case 3}: swaps involving $u_l$,

\textbf{Case 4}: swaps among $u_l$-neighbourhood not involving $u_l$.

Since other swaps beyond case 3 and case 4 cannot make any element $u_j$ such that the relative rank of $u_j$ and $u_l$ changes, thus only case 3 and case 4 contribute to set $B$.

For case 3, each time step results in at most $4\sqrt{\ln n}$ elements to set $B$. Let $X_l$ be the number of steps that element $u_l$ is a pivot and $Y_l$ be the number of steps that $u_l$ swaps when $u_l$ is a pivot. Let $P_i \subseteq U$ be the set of elements that acted as pivots along the path in the quick-sort tree of element $u_i$. Then by properties of the quick-sort algorithm, we have
$$E[\sum_{l:u_l\in P_i}X_l]=O(n),$$ and
$$ \sum_{l:u_l\in P_i}X_l \leq c_5n\ln n, $$
with probability $1-o(1)$, for a constant $c_5$.

We define $Z_l=\sum_{l:u_l\in P_i}Y_l| X_l$. Since $Y_l \sim Binomial(X_l, \Theta(\frac{1}{n}))$, $E[E[Z_l]]\leq E[\sum_{l:u_l\in P_i}X_l\Theta(\frac{1}{n})]=O(1)$. By Chernoff bound, $Z_l=O(\ln n)$ with probability $1-o(1)$. Therefore, case 3 produces at most $c_3\ln^{1.5} n$ such elements with probability $1-O(\frac{1}{n^3})$, for some constant $c_3$.

For case 4, each time step produces at most 2 elements whose relative rank with $u_l$ changes with probability $\Theta(\frac{\sqrt{\ln n}}{n})$. Let $W_l$ be the number of such swaps when $u_l$ is a pivot. We have that $E[\sum_{l:u_l\in P_i}W_l| X_l]=O(\frac{\sqrt{\ln n}}{n}X_l)=O(\sqrt{\ln n})$. By Chernoff bound, case 4 leads to at most $c_4\ln^{1.5}n$ such elements with probability $1-O(\frac{1}{n^3})$.

Let $c=c_1+c_2+c_3+4_4$. Then we get the first part of the lemma.

By union bound, for every $l$, $|\pi^{t_1}(u_l)-\tilde{\pi}^{t_1}(u_l)|\leq c\ln^{1.5}n$ holds with probability at least $1-O(\frac{1}{n^2})$. The second part of the lemma is proven.

\end{proof}

\begin{algorithm}
 \textbf{Input:} A set of elements $U$\\
 \textbf{Output:} $\widetilde{T}$
 \begin{algorithmic}[1]
 \WHILE{(true)}
    \STATE \textbf{Execute in odd steps:} /*$QS_1$*/
    \STATE $\tilde{\pi} \leftarrow $ quick\_sort($U$) in decreasing order
    \STATE $L\leftarrow \tilde{\pi}^{-1}(\{1,2,...,k-c\ln^{1.5}n\})$ and $C\leftarrow \tilde{\pi}^{-1}(\{k-c\ln^{1.5} n+1,...,k+c\ln^{1.5}n\})$ /*The constant $c$ is determined in the proof*/
    \STATE \textbf{Execute in even steps:} /*$QS_2$ */
    \STATE $\tilde{\pi}_C\leftarrow $ quick\_sort($C$) in decreasing order
    \STATE $\widetilde{T}\leftarrow L \bigcup \tilde{\pi}_C^{-1}(\{1,2,...,c\ln^{1.5}n \}) $
 \ENDWHILE
 \caption{\emph{Gaussian-Top-$k$-set}} \label{fig:Gausstopk}
 \end{algorithmic}
\end{algorithm}

Algorithm \ref{GaussianTone} solves the top-$k$-set problem in the \emph{Gaussian-swapping model}. It is similar to Algorithm \ref{fig:topk}, only different in the sizes of the sets $L$ and $C$. The difference is due to the different variations of an element's rank during a period of $O(n\ln n)$, as shown in Lemma \ref{range} and Lemma \ref{rremark1}.

The following theorem is a counterpart of Theorem \ref{The:theoremone}. It shows that Algorithm \ref{fig:Gausstopk} is also error-free with high probability, in spite of the complicated data evolving model. But it is a little weaker than Theorem \ref{The:theoremone}, due to the restriction $\alpha=\Theta(1)$ which is far smaller than $o(\frac{\sqrt{n}}{\ln{n}})$. The proof is similar to that of Theorem \ref{The:theoremone}.
\def\rr{
For any $1\leq k\leq n$, we have $\Pr(\widetilde{T}_t = (\pi^t)^{-1}(\{1,2,...,k\}))=1-o(1)$, where $\widetilde{T}_t$ is the output of Algorithm \ref{fig:Gausstopk}, $\pi^t$ is the true order at time $t$, and $t$ is such that the algorithm has run at least once.
}
\begin{Theorem} \label{GaussianTone}
\rr
\end{Theorem}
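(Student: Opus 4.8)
The plan is to follow the structure of the proof of Theorem~\ref{The:theoremone} almost verbatim, replacing every occurrence of the rank-deviation bound $O(\alpha\ln n)$ by the Gaussian-model bound $O(\ln^{1.5}n)$ supplied by Lemma~\ref{rremark1}, and replacing Lemma~\ref{quicksort} by its Gaussian counterpart Lemma~\ref{gaussianquick}. Since we fix $\alpha=\Theta(1)$, the running time of a full quick-sort on $U$ is still $O(n\ln n)$ with probability $1-o(1)$ by Lemma~\ref{gaussianquick}.

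First I would fix one run of $QS_1$, starting at $t_0$ and ending at $t_1$, and argue that $t_1-t_0=O(n\ln n)$ with probability $1-o(1)$. By Lemma~\ref{rremark1}, with probability $1-O(\frac{1}{n^2})$ we have $|\pi^{t_0}(u_l)-\tilde{\pi}^{t_0}(u_l)|=O(\ln^{1.5}n)$ for every $u_l$. Next I would bound the drift of the true rank during $[t_0,t_1]$: each element is chosen to swap with probability $\Theta(\frac{1}{n})$ per step (the first fact recalled in the proof of Lemma~\ref{rremark1}) and, conditioned on no pair of distance exceeding $4\sqrt{\ln n}$ ever swapping (which holds with probability $1-o(1)$ by Lemma~\ref{gaussianquick}), each such swap moves a rank by at most $4\sqrt{\ln n}$; over $O(n\ln n)$ steps the expected rank change is $O(\ln^{1.5}n)$, and Chernoff gives $O(\ln^{1.5}n)$ with probability $1-O(\frac{1}{n^2})$. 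Combining the two bounds and taking a union bound over all $l$, there is a constant $c$ so that $|\pi^t(u_l)-\tilde{\pi}^{t_0}(u_l)|\leq c\ln^{1.5}n$ for all $t\in[t_0,t_1]$ and all $u_l$ with probability $1-o(1)$. This is exactly the quantity that the sizes of $L$ and $C$ in Algorithm~\ref{fig:Gausstopk} were tuned to absorb.

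From here the set-membership argument is identical to Theorem~\ref{The:theoremone}: since $L=\tilde{\pi}^{-1}(\{1,\dots,k-c\ln^{1.5}n\})$, every $u_i\in L$ has true rank $\leq k$ throughout $[t_0,t_1]$; since $R=U\setminus(L\cup C)$ consists of elements with $\tilde{\pi}^{t_0}$-rank at least $k+c\ln^{1.5}n+1$, every $u_i\in R$ has true rank $>k$ throughout; hence $L\cup C$ contains precisely the candidates and $QS_2$ need only pin down which $c\ln^{1.5}n$ elements of $C$ are genuinely in the top-$k$. Finally I would treat the correctness of $QS_2$: $|C|=O(\ln^{1.5}n)$, so quick\_sort($C$) runs in $O(\ln^{1.5}n\cdot\ln\ln n)$ steps; if the rank-$k$ boundary element $\tilde{u}$ does not swap during this short window the returned set is correct, and $\Pr(\tilde{u}\text{ swaps})=O(\frac{\ln^{1.5}n\cdot\ln\ln n}{n})=o(1)$, so the set is correct with probability $1-o(1)$, and persists through the next round with the same probability.

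The main obstacle I anticipate is the step where the consecutive-swapping proof used the clean fact that ``if $\tilde{u}$ does not swap, then $\mathrm{sign}(\pi^t(\tilde u)-\pi^t(v))$ is invariant,'' which relied on swaps being strictly local. In the Gaussian model a swap of distance up to $4\sqrt{\ln n}$ can jump $\tilde u$ over several elements even without $\tilde u$ itself being a swap endpoint, so I must instead argue that the \emph{boundary crossing} between top-$k$ and the rest is stable: what matters is that no element crosses rank $k$ during the $O(\ln^{1.5}n\cdot\ln\ln n)$-step window, and the probability that any swap touches the $O(\sqrt{\ln n})$-neighbourhood of the rank-$k$ position during this window is still $O(\frac{\ln^{1.5}n\cdot\ln\ln n\cdot\sqrt{\ln n}}{n})=o(1)$. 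Once this refined stability claim is in place the remaining calculations are routine, and the theorem follows.
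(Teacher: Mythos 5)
Your proposal is correct and follows essentially the same route as the paper's proof: quick-sort error plus rank drift bounded by $O(\ln^{1.5}n)$ via Lemmas~\ref{rremark1} and~\ref{gaussianquick}, the identical $L$/$C$/$R$ membership argument, and crucially the same fix for the non-local swaps, namely requiring that no swap touches the $O(\sqrt{\ln n})$-neighbourhood of the rank-$k$ element during the $O(\ln^{1.5}n\ln\ln n)$-step run of quick\_sort($C$), which is exactly the paper's condition that the $\tilde{u}$-neighbourhood does not swap (combined with the high-probability absence of swaps of distance exceeding $4\sqrt{\ln n}$). The obstacle you flagged is precisely the point where the paper's proof deviates from Theorem~\ref{The:theoremone}, and your resolution and probability bound $O\bigl(\frac{\ln^{2}n\,\ln\ln n}{n}\bigr)=o(1)$ agree with (indeed slightly sharpen) the paper's stated $O\bigl(\frac{\ln^{3}n\,\ln\ln n}{n}\bigr)$.
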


\begin{proof}

Consider a run of $QS_1$, which starts at $t_0$ and ends at $t_1$. By Lemma \ref{gaussianquick}, we have $t_1-t_0=O(n\ln{n})$ in expectation and with probability $1-o(1)$. By Lemma~\ref{rremark1}, we have that $|\pi^{t_0}(u_l)-\tilde{\pi}^{t_0}(u_l)|\leq c_1\ln^{1.5}n$ with probability $1-O(\frac{1}{n^2})$ for every $u_l\in U$ and some constant $c_1$. During $[t_0,t_1]$, the rank of an element $u_l$ changes less than $c_2\ln^{1.5}n$ with probability $1-o(\frac{1}{n^2})$ for some constant $c_2$. Hence we have $|\pi^t{(u_l)}-\tilde{\pi}^{t_0}(u_l)|\leq (c_1+c_2)\ln^{1.5}n$ with probability $1-o(1)$ at any time $t\in [t_0,t_1]$ and any $u_l\in U$. Letting $c=c_1+c_2$, we have $|\pi^t{(u_l)}-\tilde{\pi}^{t_0}(u_l)|\leq c\ln^{1.5}n$ with probability $1-o(1)$.

Note that $L$ contains all the elements $u_i$ such that $\tilde{\pi}^{t_0}(u_i)\leq k-c\ln^{1.5}n$. Then for any $t\in[t_0,t_1]$ and any $u_i\in L$, we have that $\pi^t(u_i)\leq k$ with probability $1-o(1)$. Consider the set $R=U\backslash( L\bigcup C)=\{u_l:\tilde{\pi}^{t_0}(u_l)\geq k+c\alpha\ln n+1\}$. Then for any $t\in[t_0,t_1]$ and any $u_i\in R$, we have that $\tilde{\pi}^t(u_i)>k$ with probability $1-o(1)$. Therefore, $L\bigcup C$ contains all the elements whose true rank is no more than $k$ during $[t_0,t_1]$, with probability $1-o(1)$.

In line 6, quick\_sort$(C)$ requires time $O(\ln^{1.5}n\ln\ln n)$ with probability $1-o(1)$. Let the element of rank $k$ at the beginning of line 6 be $\tilde{u}$. If those elements in $\tilde{u}$-neighbourhood do not swap during the execution of line 6, the algorithm can always return the correct set of the largest $k$ elements at the end of quick\_sort$(C)$. This reason lies in two aspect. On the one hand, any pair whose distance is greater than $4\sqrt{\ln n}$ will not swap during a full execution of line 6 with probability $1-o(1)$. On the other hand, the rank of any element $u_i\in C$ remain smaller than $k-4\sqrt{\ln n}$ if it is smaller than $k-4\sqrt{\ln n}$ at the beginning and $\tilde{u}$-neighbourhood do not swap during the execution of line 6. Note that the probability that $\tilde{u}$-neighbourhood do not change places is $1-O(\frac{\ln^3 n\ln\ln n}{n})=1-o(1)$. During the next round of sorting $C$(before a new set of the largest $k$ elements is computed), the set of the largest $k$ elements can remain the same with probability $1-O(\frac{\ln^3 n\ln\ln n}{n})=1-o(1)$, implying the result.
\end{proof}

\begin{algorithm}
 \textbf{Input:} A set of elements $U$\\
 \textbf{Output:} $\tilde{\pi}_k$
 \begin{algorithmic}[1]
 \STATE Let $t$ be the time
 \WHILE{(true)}
    \STATE \textbf{Execute in $t\%3=1$ steps} /*$QS_1$*/
    \STATE $\tilde{\pi}\leftarrow$ quick\_sort$(U)$ in decreasing order
    \STATE $C\leftarrow \tilde{\pi}^{-1}(\{1,2,...,k+c'\ln^{1.5}n\})$ /*The constant $c'$ is determined in the proof*/
    \STATE \textbf{Execute in $t\%3=2$ steps} /*$QS_2$*/
    \STATE $\tilde{\pi}_C\leftarrow $ quick\_sort$(C)$ in decreasing order
    \STATE $P \leftarrow  \tilde{\pi}_C^{-1}(\{1,2,...,k\})$ and $\tilde{\pi}_P(u_i)\leftarrow \tilde{\pi}_C(u_i)$ for all $u_i\in P$
    \STATE \textbf{Execute in $t\%3=0$ steps} /*Local-sort*/
    \STATE $\tilde{\pi}_k \leftarrow$ \emph{Local-sort}$(P, \tilde{\pi}_P, 4c\sqrt{\ln n}+1)$ /*The constant $c$ is determined in the proof*/
 \ENDWHILE
 \caption{\emph{Gaussian-Top-$k$-selection}} \label{algorithm6}
 \end{algorithmic}
\end{algorithm}

Now we present Algorithm \ref{algorithm6}. The algorithm is designed to solve the top-$k$-selection problem. Based on the same basic idea, it is similar to Algorithm \ref{topkorder}, with difference only in three aspects:
\begin{enumerate}
  \item The size of the set $C$ is $k+O(\ln^{3/2}n)$ rather than $k+O(\alpha\ln n)$ in Algorithm \ref{topkorder}. This is due to the $O(\ln^{3/2}n)$ ranking error of randomized quick-sort in the Gaussian-swapping model, by Lemma \ref{rremark1}.
  \item It does not need $QS_3$ of Algorithm \ref{topkorder}. The reason lies in the restriction that $\alpha=\Theta(1)$ rather than $o(\frac{\sqrt{n}}{\ln{n}})$. In fact, if focusing on the case $\alpha=\Theta(1)$, $QS_3$ can also be removed from Algorithm \ref{topkorder} without compromising the performance of Algorithm \ref{topkorder}.
  \item The third argument of \emph{Local-sort} is $4c\sqrt{\ln n}+1$ rather than $4c+1$, meaning that longer blocks are used in locally correcting the sorting errors. This is because that pairs with distance $\Theta(\sqrt{\ln n})$ can occur with high probability within time $O(n\ln n)$ in the Gaussian-swapping model.
\end{enumerate}

The next theorem shows the performance of Algorithm \ref{algorithm6}. The proof is similar to that of Theorem \ref{corollaryeeleven}.

\def\rrrrr{Assume $k=O((\frac{n}{\ln{n}})^{1-\epsilon})(\epsilon>0)$. Let $\tilde{\pi}_k^t$ be the output of \emph{Gaussian-Top-$k$-selection} and $\pi_k^t$ be the true order over the largest $k$ elements at time $t$. For sufficiently large $t$, we have that:
\begin{enumerate}
\item If $k=o(\frac{\sqrt{n}}{\ln^{0.25}n})$, $\Pr(\tilde{\pi}_k^t\equiv \pi_k^t)=1-o(1)$,
\item If $k=\Theta(\frac{\sqrt{n}}{\ln^{0.25}n})$, $\Pr(\tilde{\pi}_k^t\equiv \pi_k^t)=\Omega(1)$, and
\item If $k=\omega(\frac{\sqrt{n}}{\ln^{0.25}n})$, $\Pr(\mathrm{KT}(\tilde{\pi}_k^t,\pi_k^t)=O(\frac{k^2\ln n}{n}))=1-o(1)$.
\end{enumerate}
}
\begin{Theorem} \label{rrmakr5}
\rrrrr
\end{Theorem}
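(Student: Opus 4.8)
The plan is to mirror the five-step argument of Theorem~\ref{corollaryeeleven}, replacing every $O(\alpha\ln n)$ rank-deviation bound by the $O(\ln^{1.5}n)$ bound of Lemma~\ref{rremark1}, every $O(1)$ per-swap displacement by the $O(\sqrt{\ln n})$ displacement that holds whp in the Gaussian model (by Lemma~\ref{gaussianquick}, swaps of distance $>4\sqrt{\ln n}$ do not occur whp), and collapsing the three phases $QS_1,QS_2,$ \emph{Local-sort} of Algorithm~\ref{algorithm6} (there is no $QS_3$). First I would fix an arbitrary $t_0$ after $QS_1$ has completed twice and unwind the timeline by Lemma~\ref{gaussianquick}: the last completed \emph{Local-sort} runs on $[t_2,t_1]$ with $t_0-t_1,t_1-t_2=O(k\sqrt{\ln n})$ (its block size is $\Theta(\sqrt{\ln n})$ and it makes $\Theta(k)$ passes); its input comes from the latest $QS_2$ on $[t_4,t_3]$ with $t_3-t_4=O((k+\ln^{1.5}n)\ln n)$; and that $QS_2$ consumes the $C$ from the latest $QS_1$ on $[t_6,t_5]$ with $t_5-t_6=O(n\ln n)$.

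Then I would run the analogues of Steps~1--2. In Step~1, combining Lemma~\ref{rremark1} (an $O(\ln^{1.5}n)$ sorting error for $QS_1$) with a Chernoff bound on the rank drift of a fixed $u_i$ over $[t_5,t_0]$ (length $O(n\ln n)$), where the drift is at most the number of swaps touching $u_i$ times $4\sqrt{\ln n}$, hence $O(\ln^{1.5}n)$ whp, yields $|\pi^t(u_i)-\tilde\pi^{t_5}(u_i)|=O(\ln^{1.5}n)$ for all $u_i$ and all $t\in[t_5,t_0]$; this gives event $\mathcal E_1$ that $C$ (of size $k+c'\ln^{1.5}n$) contains all top-$k$ elements throughout. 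Step~2 adds event $\mathcal E_3$ that no pair straddling the rank-$k$/$(k+1)$ boundary swaps during $[t_4,t_0]$; since a straddling swap at distance $d$ has at most $d$ realizations, the per-step boundary-crossing probability is $O(1/n)$, so $\mathcal E_3$ fails with probability $O((k+\ln^{1.5}n)\ln n/n)=o(1)$ under $k=O((n/\ln n)^{1-\epsilon})$. Thus $P$ is exactly the top-$k$ set throughout $[t_3,t_0]$.

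The crux, and the step I expect to be the main obstacle, is the Gaussian analogue of Step~3: showing that the order $\tilde\pi_P^{t_3}$ fed to \emph{Local-sort} satisfies $|\pi^t(u_j)-\tilde\pi_P^{t_3}(u_j)|\le c\sqrt{\ln n}$ for all $u_j\in P$ and $t\in[t_3,t_0]$, which is exactly the bound justifying the block length $4c\sqrt{\ln n}+1$ and which, since there is no $QS_3$, must be extracted directly from $QS_2$. A naive application of Lemma~\ref{rremark1} to $C$ only gives $O(\ln^{1.5}n)$, too weak; instead I would rerun that lemma's case analysis with the smaller size $|C|=k+O(\ln^{1.5}n)$ and the shorter window. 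The key leverage is the hypothesis $k=O((n/\ln n)^{1-\epsilon})$: it forces the expected number of swaps touching any fixed $u_j$ over $[t_4,t_0]$ to be $o(1)$, so by the union-plus-power-of-expectation trick of Step~3 of Theorem~\ref{corollaryeeleven}, every $u_j\in P$ is touched by at most a constant number of swaps whp; with the $4\sqrt{\ln n}$ per-swap displacement cap, and using that $QS_2$'s total pivot time on $C$ is only $O(|C|)$ (so the pivot-induced contributions are likewise $O(\sqrt{\ln n})$), this produces the desired $O(\sqrt{\ln n})$ deviation. This is precisely the sense in which $\alpha=\Theta(1)$ lets $QS_2$ replace $QS_3$.

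Finally I would carry out Steps~4--5. Step~4 is the verbatim three-case claim of Theorem~\ref{corollaryeeleven} with every constant $c$ replaced by $c\sqrt{\ln n}$: assuming no element swaps while it is $u_{max}$ (which holds with probability $1-O(k\sqrt{\ln n}/n)=1-o(1)$), any pair of $P$-elements whose true order is constant over $[t_2,t_1]$ is ordered correctly, so $\mathrm{KT}(\tilde\pi_k^{t_1},\pi_k^{t_1})$ is at most the number $Y$ of $P$-inversions created during $[t_2,t_1]$. For Step~5 I would use that a within-$P$ swap occurs with per-step probability $O(k/n)$ and that \emph{Local-sort} lasts $O(k\sqrt{\ln n})$ steps, so the count $N$ of such swaps has $E[N]=O(k^2\sqrt{\ln n}/n)$, while each Gaussian swap creates at most $2D-1=O(\sqrt{\ln n})$ inversions, giving $Y\le O(\sqrt{\ln n})\cdot N$. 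Then $\Pr(N=0)=(1-O(k/n))^{O(k\sqrt{\ln n})}$ is $1-o(1)$ when $k=o(\sqrt n/\ln^{0.25}n)$ and $\Omega(1)$ when $k=\Theta(\sqrt n/\ln^{0.25}n)$, yielding parts~(1) and~(2) after intersecting with $\mathcal E_1,\mathcal E_2$; and when $k=\omega(\sqrt n/\ln^{0.25}n)$, Chernoff gives $N=O(E[N])$ whp, whence $Y=O(k^2\ln n/n)$, which is part~(3). As in Theorem~\ref{corollaryeeleven}, replacing $[t_2,t_1]$ by $[t_2,t_0]$ absorbs the post-$t_1$ drift without changing the orders of magnitude.
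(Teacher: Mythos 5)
Your proposal follows the paper's own proof almost step for step: the same timeline unwinding via Lemma~\ref{gaussianquick}, the same events $\mathcal{E}_1,\mathcal{E}_2,\mathcal{E}_3$, the same power-of-expectation-plus-union trick in Step~3 to extract an $O(\sqrt{\ln n})$ deviation bound directly from $QS_2$ (this is exactly how the paper compensates for the absence of $QS_3$), and the same Step~5 accounting $\mathrm{KT}\leq O(\sqrt{\ln n})\cdot N$ with $\Pr(N=0)=(1-O(k/n))^{O(k\sqrt{\ln n})}$. One harmless difference: the paper splits off the case $k=o(\ln^{1.5}n)$ before Step~3 (there the whole set $C$ stays static whp, so the theorem is immediate), whereas you treat all $k$ uniformly; your version works, since with window length $O((k+\ln^{1.5}n)\ln n)$ one still has $k\,(E[Y_i])^{c_4}=o(1)$ for a suitable constant $c_4$ under $k=O((n/\ln n)^{1-\epsilon})$.

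There is, however, one genuine gap, and it sits precisely at the point where the Gaussian model differs from the consecutive one: in Step~4 you condition only on ``no element swaps while it is $u_{max}$.'' In the consecutive model this suffices because the relative order of $u_{max}$ and any other element can change only when the pair $(u_{max},v)$ itself swaps. In the Gaussian model it does not suffice: a swap of a pair $(a,b)$ that \emph{straddles} the current candidate $w$ exchanges the ranks of $a$ and $b$ and thereby flips the relative order of $w$ with both $a$ and $b$, even though $w$ never swaps. Concretely, \emph{Maximum-Find} may eliminate $u_i$ against a candidate $w$ with $\pi^t(w)<\pi^t(u_i)$, then $u_i$ (or another eliminated element) can swap across $w$, after which $u_j$ legitimately beats $w$ and is output --- violating Case~1 of your claim even though $\pi^t(u_i)<\pi^t(u_j)$ throughout $[t_2,t_1]$; the transitivity chains in Cases~2--3 break for the same reason. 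The paper repairs this by conditioning on the stronger event that \emph{no swap occurs within the $u_{max}$-neighbourhood} (since whp all swapped pairs have distance at most $4\sqrt{\ln n}$, every straddling swap is such a neighbourhood swap). That stronger event still has probability $1-o(1)$: per step a neighbourhood swap has probability $O(\sqrt{\ln n}/n)$, so over the $O(k\sqrt{\ln n})$ steps of \emph{Local-sort} the failure probability is $O(k\ln n/n)=o(1)$ under $k=O((n/\ln n)^{1-\epsilon})$. So your theorem statement and quantitative conclusions survive, but Step~4 as you stated it would fail and must be strengthened in this way.
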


We first sketch the main idea of the proof. Similar to that of Theorem \ref{corollaryeeleven}, it consists of five steps. First, with high probability, the set $C$ produced by $QS_1$ includes all the largest $k$ elements during the next run of $QS_1$. Second, with high probability, the set $P$ produced by $QS_2$ exactly consists of the largest $k$ elements during the next run of $QS_2$. Third, with high probability, the true rank of any element remains close to that estimated by $QS_2$, during the next run of $QS_2$. Fourth, with high probability, the error of the order computed by \emph{Local-sort} is upper-bounded by the swaps multiplied by $O(\sqrt{\ln n})$ during the run of \emph{Local-sort}. And fifth, proper upper bound of the swaps during a run of \emph{Local-sort} is presented. These steps immediately lead to the theorem.
\begin{proof}

We'll also use two results in the proof of Lemma \ref{gaussianquick}. First, each time the probability that any element is chosen to swap is $\Theta(\frac{1}{n})$. Second, pairs with distance greater than $4\sqrt{\ln n}$ won't swap with probability $1-o(1)$ during $[t_6, t_0]$.

We will show that the theorem holds at $t_0$, where $t_0$ is an arbitrary time step after $QS_1$ in Algorithm \ref{algorithm6} has run twice.

Consider the last completed \emph{Local-sort} before $t_0$, which starts at $t_2$ and terminates at $t_1$. It is easy to see that $t_0-t_1=O(k\sqrt{\ln n})$ and $t_1-t_2=O(k\sqrt{\ln n})$. Also note that the input of \emph{Local-sort} at $t_2$ comes from the latest completed $QS_2$ before $t_2$, which starts at $t_4$ and terminates at $t_3$. By Lemma \ref{gaussianquick}, with probability $1-o(1)$, $t_3-t_4=O(T)$ and $t_0-t_3=O(T)$, where $T=O(\ln^{1.5} n\ln\ln n)$ if $k=o(\ln^{1.5} n)$ and $T=O(k\ln k)$ otherwise. Likewise, the input of $QS_2$ at $t_4$ comes from the latest completed $QS_1$ before $t_4$, which starts at $t_6$ and finishes at $t_5$, and with probability $1-o(1)$, $t_0-t_5=O(n\ln n)$ and $t_5-t_6=O(n\ln n)$. The relation from $t_6$ to $t_0$ is illustrated in Fig. \ref{fig:times-Gaussian}.

\textbf{Step 1}. We show that with probability $1-o(1)$, for all $u_i\in U$ and all $t\in [t_5,t_0]$, $|\pi^t(u_i)-\tilde{\pi}^{t_5}(u_i)|\leq c'\ln^{1.5}n$, where $c'$ is a constant.

Note that $|\pi^t(u_i)-\tilde{\pi}^{t_5}(u_i)|\leq |\pi^t(u_i)-\pi^{t_5}(u_i)|+|\pi^{t_5}(u_i)-\tilde{\pi}^{t_5}(u_i)|$, and $\Pr(|\pi^{t_5}(u_i)-\tilde{\pi}^{t_5}(u_i)|\leq c_1\ln^{1.5}n)=1-o(\frac{1}{n^2})$ for some constant $c_1$ by Lemma \ref{rremark1}. From Chernoff bound and the fact that during $[t_5, t_0]$, the rank of $u_i$ changes $O(\ln^{1.5} n)$ in expectation, $\Pr(|\pi^t(u_i)-\pi^{t_5}(u_i)|\leq c_2\ln^{1.5} n)=1-o(\frac{1}{n^2})$. Hence, due to Union bound, $\Pr(\forall t\in[t_5,t_0] \wedge \forall u_i\in U, |\pi^t(u_i)-\tilde{\pi}^{t_5}(u_i)|\leq c'\ln^{1.5} n)=1-o(\frac{1}{n})$, where $c'=c_1+c_2$.

As a result, event $\mathcal{E}_1$ happens with probability $1-o(1)$, where $\mathcal{E}_1$ means that the set $C$ produced at $t_5$ by $QS_1$, denoted as $C^{t_5}$, contains all the largest $k$ elements during $[t_5,t_0]$.

\textbf{Step 2}. We show that event $\mathcal{E}_2$ happens with probability $1-o(1)$, where $\mathcal{E}_2$ stands for the event that the set $P$ produced by $QS_2$ at $t_3$, denoted as $P^{t_3}$, exactly consists of the largest $k$ elements of $U$ during $[t_3,t_0]$.

Considering the semantics of $QS_2$, it follows from two facts. On the one hand, $\mathcal{E}_1$ happens with probability $1-o(1)$. One the other hand, event $\mathcal{E}_3$ happens with probability $1-O(\frac{T\sqrt{\ln n}}{n})=1-o(1)$, where $\mathcal{E}_3$ means that the neighbourhood of $k$-th element at time $t_4$ don't swap during $[t_4,t_0]$.

Therefore, hereunder we'll assume that $\mathcal{E}_1,\mathcal{E}_2$ and $\mathcal{E}_3$ all happen.

Now we prove that the theorem holds if $k=o(\ln^{1.5}n)$. When $k=o(\ln^{1.5} n)$, $|C|=O(\ln^{1.5}n)$. Therefore, with probability $1-o(1)$, $t_0-t_4=O(\ln^{1.5}n\ln\ln n)$. When $t_0-t_4=O(\ln^{1.5}n\ln\ln n)$, no element of $C$ changes places during $[t_4,t_0]$ with probability $1-O(\frac{\ln^3 n\ln\ln n}{n})=1-o(1)$. If the order of $C$ does not change throughout $[t_4,t_0]$, the order over $P$ estimated by $QS_2$ at $t_3$ is exactly $\pi^t|_P$ for all $t\in [t_3, t_0]$. Then, the order over $P$ estimated by \emph{Local-sort} at $t_1$ is equal to $\pi^t|_P$ for $t\in[t_1,t_0]$, immediately leading to the theorem.

\begin{figure}
\centering
\includegraphics[scale=0.4]{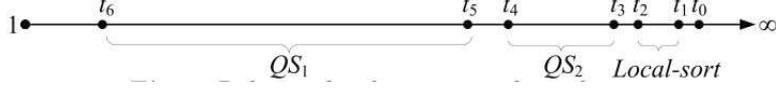}
\caption{Relationship from $t_6$ to $t_0$}\label{fig:times-Gaussian}
\end{figure}

As a result, in the rest of the proof, we assume that $k=\Omega(\ln^{1.5}n)$.

Let $\tilde{\pi}_C^{t_3}$ be the order of $C$ estimated by $QS_2$ at time $t_3$, and $\tilde{\pi}_P^{t_3}\triangleq \tilde{\pi}_C^{t_3}|_{P^{t_3}}$ be the induced order of $P^{t_3}$.

\textbf{Step 3}. We show that for a constant $c$, with probability $1-o(1)$, for all $u_j\in P^{t_3}$ and all $t\in [t_3,t_0]$, $|\pi^t(u_j)-\tilde{\pi}_P^{t_3}(u_j)|\leq c\sqrt{\ln n}$.

To see why, first consider $Y_i$, the number of times that the element $(\pi^{t_4})^{-1}(i)$ swaps during $[t_4,t_0]$. Since $t_0-t_4=O(k\ln k)$ in expectation and with probability $1-o(1)$, $E[Y_i]=O(\frac{k\ln k}{n})$. By Chernoff bound, for any constant $c_3\geq e$, $\Pr(Y_i\geq c_3)\leq (E[Y_i])^{c_3}$. Because $k=O((\frac{n}{\ln n})^{1-\epsilon})$, there is a constant $0<\beta<1$ such that $\frac{k\ln k}{n}=O(k^{\beta-1})$. Arbitrarily choose a constant $c_4>\max\{(1-\beta)^{-1}, e\}$ and we have $\Pr(\exists i\in [k], Y_i\geq c_4)\leq k(O(\frac{k\ln k}{n})^{c_4})=O(k^{1+(\beta-1)c_4})=o(1)$. Consequently, $\Pr(\forall i\in [k], Y_i<c_4)=1-o(1)$.

On the other hand, during $[t_4, t_3]$, for every pivot element $u_j$, if elements in $u_j$-neighbourhood do not change places when $u_j$ is a pivot, for any pair $(u_a,u_b)$ that $u_a < u_b$ always holds during $[t_4,t_3]$, $u_a<_{\tilde{\pi}_C^{t_3}} u_b$ also holds and such event happens with probability $1-O(\frac{k\ln k\sqrt{\ln n}}{n})=1-o(1)$.

For any element $u_i$ in the set $C$, we define $S_i$ as follows:

$S_i=\{u_j: u_i <_{\tilde{\pi}_C^{t_3}} u_j,u_i >_{{\pi}^{t_3}} u_j, \exists t\in [t_4, t_3):u_i <_{{\pi}^{t}} u_j  \} \bigcup \{u_j: u_i >_{\tilde{\pi}_C^{t_3}} u_j,u_i <_{{\pi}^{t_3}} u_j, \exists t\in [t_4, t_3):u_i >_{{\pi}^{t}} u_j  \}$

$|S_i|$ is bounded by the number of elements that once are in the $u_i$-neighbourhood at some time slot $t\in[t_4, t_0]$. Since during $[t_4, t_0]$, each element swaps at most $c_4$ times, there are at most $2c_4^2 \times 4\sqrt{\ln n}=8c_4^2\sqrt{\ln n}$ elements that are in the $u_i$-neighbourhood during $[t_4,t_0]$. Therefore, we have that $|\pi^t(u_j)-\tilde{\pi}_P^{t_3}(u_j)|\leq 8c_4^2\sqrt{\ln n}$. In the algorithm \ref{algorithm6}, let $c=8c_3^2$.

\textbf{Step 4.} We claim that with high probability, for any $u_i,u_j\in P^{t_3}$, $\tilde{\pi}_k^{t_1}(u_i)<\tilde{\pi}_k^{t_1}(u_j)$ if $\pi^t(u_i)<\pi^t(u_j)$ throughout $[t_2,t_1]$.

Since the probability that no swaps occur within $u_{max}$-neighbourhood during $[t_2,t_1]$ is $1-O(\frac{k\sqrt{\ln n}}{n})=1-o(1)$, we assume hereunder that no swaps occur within $u_{max}$-neighbourhood during $[t_2,t_1]$.

The claim can be proved in three cases. Assume that $\pi^t(u_i)<\pi^t(u_j)$ for all $t\in[t_2, t_1])$. Let $r_i\triangleq \tilde{\pi}_P^{t_3}(u_i)$ and $r_j\triangleq \tilde{\pi}_P^{t_3}(u_j)$.
\begin{itemize}
\item \textbf{Case 1}: $u_i,u_j\in B_l$. Let $l_0$ be the largest such $l$. This implies that either $u_{max}$ of $B_{l_0}$ is either $u_i$ or $u_j$. Since no element swaps while it is $u_{max}$, the $u_{max}$ computed by \emph{Local-sort} cannot be $u_j$. This means that $u_{max}$ of $B_{l_0}$ is $u_i$ and $\tilde{\pi}_k^{t_1}(u_i)=l_0<\tilde{\pi}_k^{t_1}(u_j)$.
 \item \textbf{Case 2}: $r_i<r_j$ and there is no $l$ such that $u_i,u_j\in B_l$. There must be some $r_i-4c\sqrt{\ln n}\leq l < r_j-4c\sqrt{\ln n}$ such that $u_{max}$ of $B_l$ is $u_i$. Hence, $\tilde{\pi}_k^{t_1}=l<r_j-4c\sqrt{\ln n}\leq \tilde{\pi}_k^{t_1}(u_j)$.
 \item \textbf{Case 3}: $r_i> r_j$ and there is no $l$ such that $u_i, u_j\in B_l$. We have $r_i-r_j<r_i-\tilde{\pi}^t(u_i)-r_j+\tilde{\pi}^t(u_j)$ for any $t\in [t_2,t_1]$, so $r_i-r_j<2c\sqrt{\ln n}$. For any integer $m$, let $V_m=(\tilde{\pi}_P^{t_3})^{-1}(\{1,2,...,m+4c\sqrt{\ln n}\})$. For all $r_j-4c\sqrt{\ln n}\leq l < r_i-4c\sqrt{\ln n}$, since $B_l\subseteq V_l$ and $|V_l\backslash V_{r_j-6c\sqrt{\ln n}-1}|\leq 4c\sqrt{\ln n}$, one gets $|B_l\bigcap V_{r_j-6c\sqrt{\ln n}-1}|\geq 1$. For any $u'\in B_l\bigcap V_{r_j-6c\sqrt{\ln n}-1}$ and for all $t\in [t_2, t_1]$, $\pi^t(u')\leq \tilde{\pi}_P^{t_3}(u')+c\sqrt{\ln n} \leq r_j-c\sqrt{\ln n}-1<\pi^t(u_j)$. By the proof of case 1 (with $u_i$ replaced by $u'$), $u_{max}$ of $B_l$ is not $u_j$ for any $r_j-4c\sqrt{\ln n}\leq l<r_i-4c\sqrt{\ln n}$. This means that there exists $l$ such that $u_i, u_j\in B_l$, which is a contradiction.
\end{itemize}

\textbf{Step 5}. The claim in Step 4 means that $\Pr(\mathrm{KT}(\tilde{\pi}_k^{t_1}, \pi_k^{t_1})\leq 4Y\sqrt{\ln n})=1-o(1)$, where $Y$ is the number of swaps occurring in $P$ during $[t_2, t_1]$. Since $\Pr(Y=0)=(1-\frac{k}{n})^{O(k\sqrt{\ln n})}$, $\Pr(Y=0)=1-o(1)$ if $k=o(\frac{\sqrt{n}}{\ln^{0.25} n})$, and $\Pr(Y=0)=\Omega(1)$ if $k=\Theta(\frac{\sqrt{n}}{\ln^{0.25} n})$. When $k=\omega(\frac{\sqrt{n}}{\ln^{0.25} n})$, note that $E[Y]=O(\frac{k^2\sqrt{\ln n}}{n})$, so $\Pr(\mathrm{KT}(\tilde{\pi}_k^{t_1},\pi_k^{t_1})=O(\frac{k^2\ln n}{n}))=1-o(1)$ by Chernoff bound. Actually, we still have these results if $[t_2,t_0]$ is considered.

Altogether, the theorem holds.
\end{proof}

As a counterpart of Theorem \ref{corollaryeeleven}, Theorem \ref{rrmakr5} differs in three aspects:
\begin{itemize}
  \item It only considers $\alpha=\Theta(1)$, rather than $\alpha=o(\frac{\sqrt{n}}{\ln{n}})$.
  \item The critical point of $k$ is $\Theta(\frac{\sqrt{n}}{\ln^{0.25}n})$, instead of $\Theta(\sqrt{\frac{n}{\alpha}})$.
  \item Beyond the critical point, the error bound is $O(\frac{k^2\ln n}{n})$, bigger than $O(\frac{k^2\alpha}{n})$.
\end{itemize}

Except for the Gaussian distribution, $d$ can also be determined by other discrete distributions, for example, $p(d)=\frac{\beta}{d^{\gamma}}$, where $\gamma$ is a constant and $\beta$ is a normalizing factor. When $\gamma$ is large enough (say, $\gamma>10$), the results similar to those in the Gaussian-swapping model can be obtained.

\vspace{-0.2cm}
\section{Conclusions}\label{sec:con}
\vspace{-0.2cm}

In this paper we completely solve the top-$k$-set problem for any $k\leq n$ by algorithmically identifying the exact largest $k$ objects with probability $1-o(1)$ at every time step in \emph{consecutive-swapping model} and \emph{Gaussian-swapping model}.

In \emph{consecutive-swapping model}, for the top-$k$-selection problem, it can also be exactly solved at each time step with probability $1-o(1)$ if $k=o(\sqrt{\frac{n}{\alpha}})$. And the upper bound is tight since when $k=\Omega(\sqrt{\frac{n}{\alpha}})$, no algorithms can estimate the correct order of the largest $k$ elements with probability $1-o(1)$ at any large enough time $t$. For the case where $k=\Theta(\sqrt{\frac{n}{\alpha}})$, Algorithm~\ref{topkorder} is designed which solves the top-$k$-selection problem error-free with a constant probability of success. When $\alpha$ is a constant and $k=\omega(\sqrt{n})$, we show that any algorithm solving this problem must have error lower bound $\Omega(\frac{k^2}{n})$ with probability $1-o(1)$. In fact, it is shown that the lower bound is tight if $k=O(({\frac{n}{\ln n}})^{1-\epsilon})$ for any $\epsilon>0$.

A number of problems remain open for the top-$k$-selection problem in the \emph{consecutive-swapping model}. For $\alpha=\omega(1)$, we don't show that if the error bound $O(\frac{k^2\alpha}{n})$ is tight when $k=\omega(\sqrt{\frac{n}{\alpha}})$. And for $\alpha=O(1)$, there exists a gap between $k=n$ and $k=O(({\frac{n}{\ln n}})^{1-\epsilon})$, where the error bound $\Omega(\frac{k^2}{n})$ has not yet shown to be tight. We conjecture that the bound keeps tight. Currently, for the special case $k=n$, the best known upper bound is $O(n\ln\ln n)$ \cite{Anagnostopoulos}. 
Another direction is to study the trade-off between the error and the accessibility to the data. For example, if at every time step, the algorithm is allowed to query the relative order of $\ln n$ objects, will the error vanish with high probability?

In the \emph{Gaussian-swapping model}, we obtained similar results for the top-$k$-selection problem. However, we have not yet obtained any non-trivial lower bound.

\end{document}